\patchcmd{\@maketitle}{\LARGE \@title}{\fontsize{14}{17.2}\selectfont\@title}{}{}
\newcommand{\blind}{1}
\newtheorem{theorem}{Theorem}
\newtheorem{definition}[theorem]{Definition}
\newtheorem{remark}[theorem]{Remark}
\newtheorem{corollary}[theorem]{Corollary}
\newtheorem{assumption}{Assumption}
\newcommand{\GP}{$\mathcal{GP}$}
\begin{document}

\bibliographystyle{apalike}

\def\spacingset#1{\renewcommand{\baselinestretch}%
{#1}\small\normalsize} \spacingset{1}


\if1\blind
{
  \title{\bf The Traveling Bandit: A Framework for Bayesian Optimization with Movement Costs}
    \author[1]{Qiyuan Chen}
    \author[1, *]{Raed Al Kontar}
    \date{} 

    \affil[1]{University of Michigan, Ann Arbor}
    \affil[*]{Corresponding author: alkontar@umich.edu}
  \maketitle
} \fi

\if0\blind
{
    \title{\bf The Traveling Bandit: A Framework for Bayesian Optimization with Movement Costs}
  \author{}
  \date{} 
\maketitle
} \fi

\bigskip
\begin{abstract}
This paper introduces a framework for Bayesian Optimization (BO) with metric movement costs, addressing a critical challenge in practical applications where input alterations incur varying costs. Our approach is a convenient plug-in that seamlessly integrates with the existing literature on batched algorithms, where designs within batches are observed following the solution of a Traveling Salesman Problem. The proposed method provides a theoretical guarantee of convergence in terms of movement costs for BO. Empirically, our method effectively reduces average movement costs over time while maintaining comparable regret performance to conventional BO methods. This framework also shows promise for broader applications in various bandit settings with movement costs.
\end{abstract}

\noindent%
{\it Keywords: Bayesian Optimization, Movement Cost, Stochastic Bandit}  

\medskip

\spacingset{1.2}

\section{Introduction}
\label{sec:intro}

Bayesian Optimization \citep[BO, ][]{gramacy2020surrogates}, is a class of global optimization methods for expensive-to-evaluate black-box functions. BO sequentially selects inputs, referred to as designs hereafter, to observe with the goal of finding an optimal solution using the fewest number of experiments. In a basic black-box optimization setup, the costs of experiments are treated equally, so minimizing experimental costs is equivalent to minimizing the number of experiments. However, in many practical applications, altering the designs to a black-box function can incur considerable costs, which we call a movement cost. In mineral exploration, moving drilling rigs from one location to another incurs both time and costs. Such movement is not restricted only to physical movements. In a 3D printing example, the time to change the temperature of a 3D printer's hotend is often considerable, especially when the hotend is raised or cooled to a very different temperature. Unlike the settings for cost-aware BO \citep{lee2021nonmyopic}, movement cost does not solely depend on the chosen design itself, but depends on the previous design chosen by the algorithm, making it more challenging than design-specific costs. Indeed, movement costs are prevalent in various real-world problems, including industrial processes \citep{teh2008droplet}, wind energy \citep{ramesh2022movement}, e-commerce \citep{koren2017bandits}, and environmental monitoring \citep{leu2020air,samaniego2021bayesian,hellan2022bayesian}.

To further highlight our objective, Fig. \ref{fig:Branin} provides a visual comparison of the sample paths (in red) taken by a movement-cost unaware method and the movement cost-aware method proposed in this paper. The function values are plotted in the background: the cooler the colors, the better the function values. Although both algorithms visit similar design points, it is evident that the movement cost-aware method significantly reduces movement by following a more efficient route through them.

\begin{figure}[ht]
    \centering
        \begin{subfigure}{0.49\linewidth}
        \centering
        \includegraphics[width=0.49\textwidth]{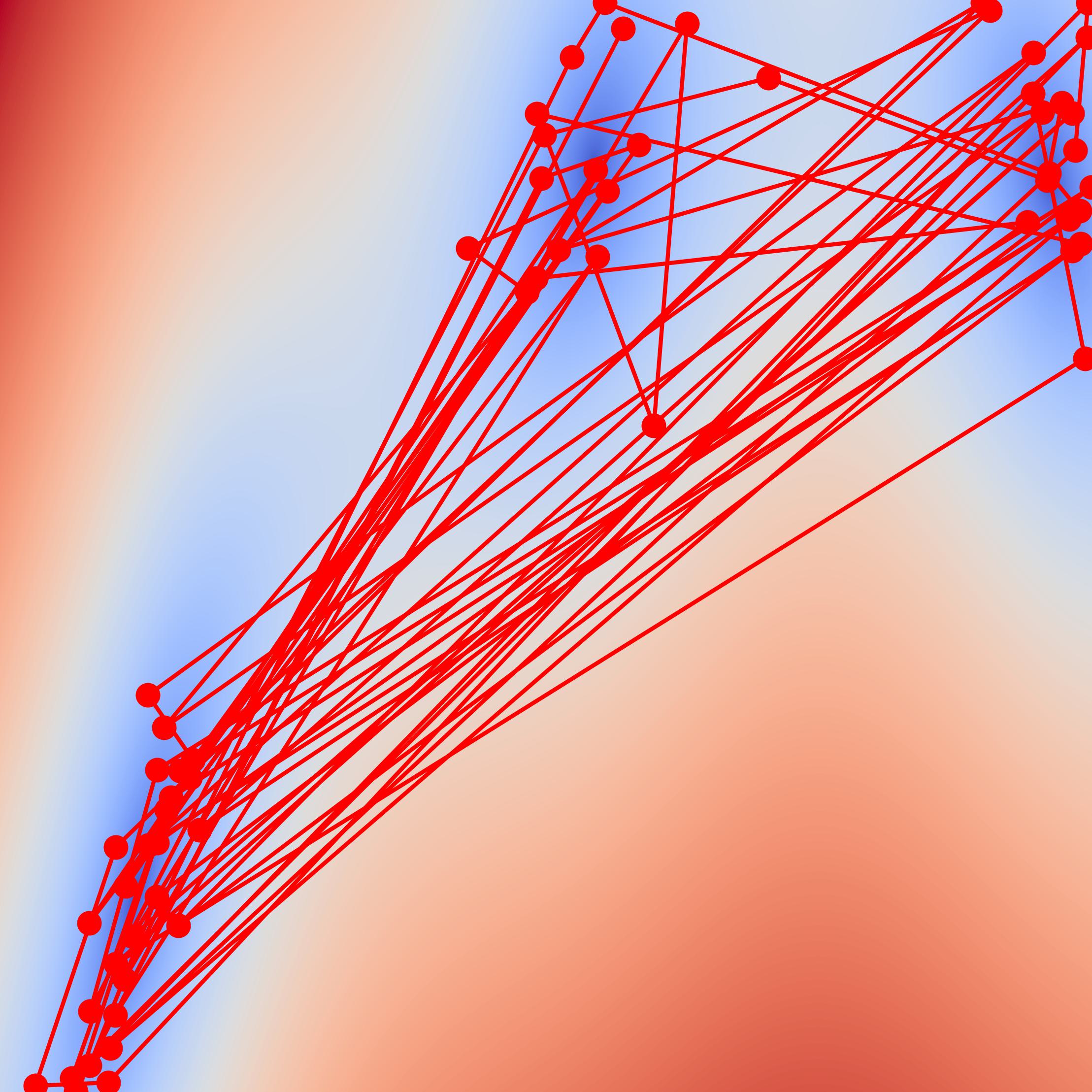}
        \caption{Cost-unaware}
        \label{fig:ts}
    \end{subfigure}
    \hfill
    \begin{subfigure}{0.49\linewidth}
        \centering
        \includegraphics[width=0.49\textwidth]{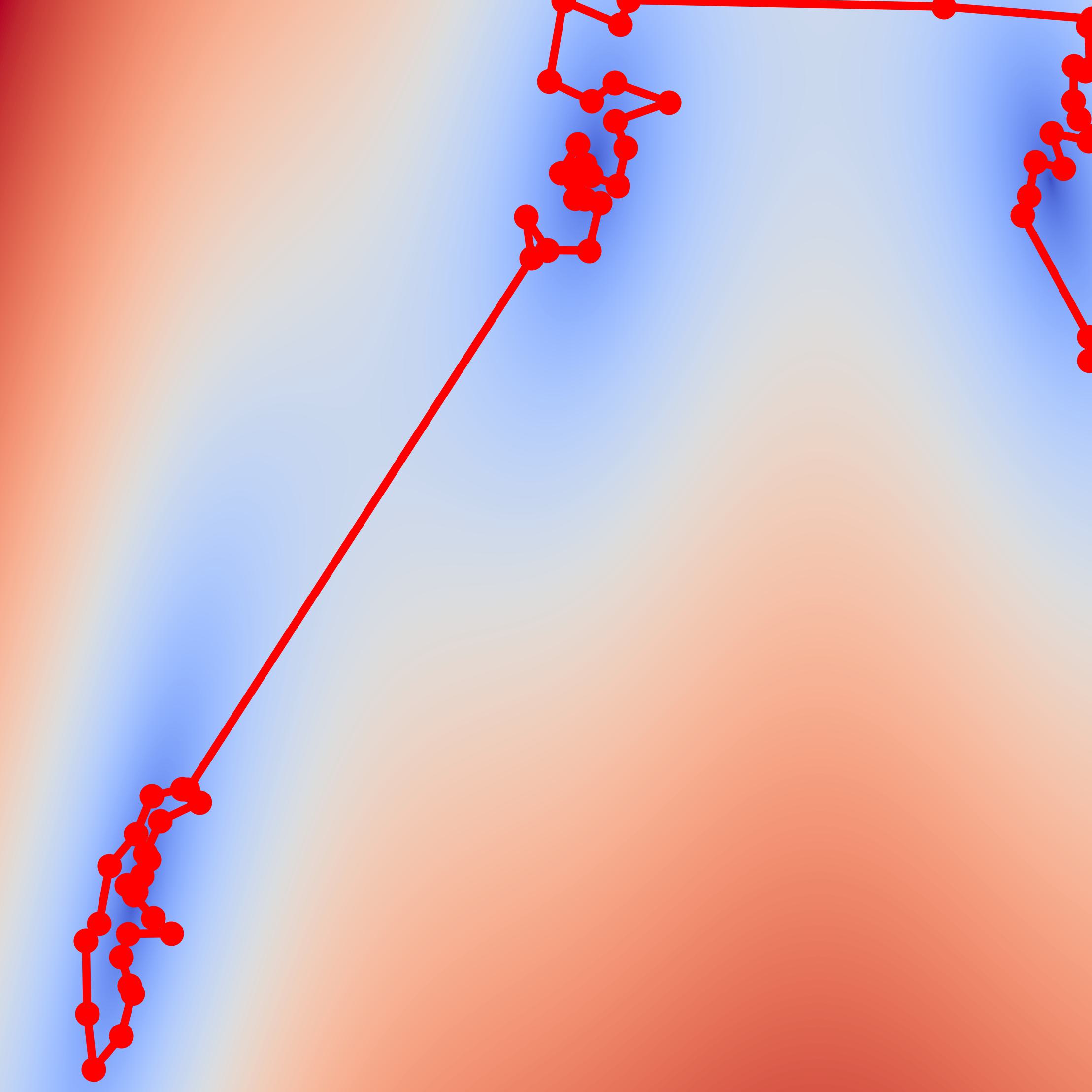}
        \caption{Cost-aware}
        \label{fig:tts}
    \end{subfigure}
    \caption{Sample path of  movement-cost unaware and aware approaches}
    \label{fig:Branin}
\end{figure}

The fundamental challenge in BO with movement costs lies in the conflict between exploration and movement. On the one hand, exploration of various designs is crucial for gauging good solutions. On the other hand, drastic changes in designs can lead to high movement costs. In fact, the need for movement is highly dependent on the need for exploration, which is determined by the hardness of the black-box function.  Fig. \ref{fig:quadvscon} demonstrates this difference with two extreme examples. The figures present the last half of the points visited by a movement cost-unaware method and our method. When there is an obvious unique optimal solution (as shown in the quadratic function in Fig.\ref{fig:ucb_quad} and \ref{fig:tpe_quad}), movement costs can converge automatically, even without being explicitly considered. This is because optimizing the function value naturally aligns with minimizing movement. Conversely, if the function has multiple comparable solutions (as shown in the constant function in Fig.\ref{fig:ucb_con} and \ref{fig:tpe_con}), any reasonable algorithm will continue to iteratively explore all of them, which results in significant movement costs.

\begin{figure}[htb]
    \centering
    \begin{subfigure}{0.24\linewidth}
        \centering
        \includegraphics[width=\textwidth]{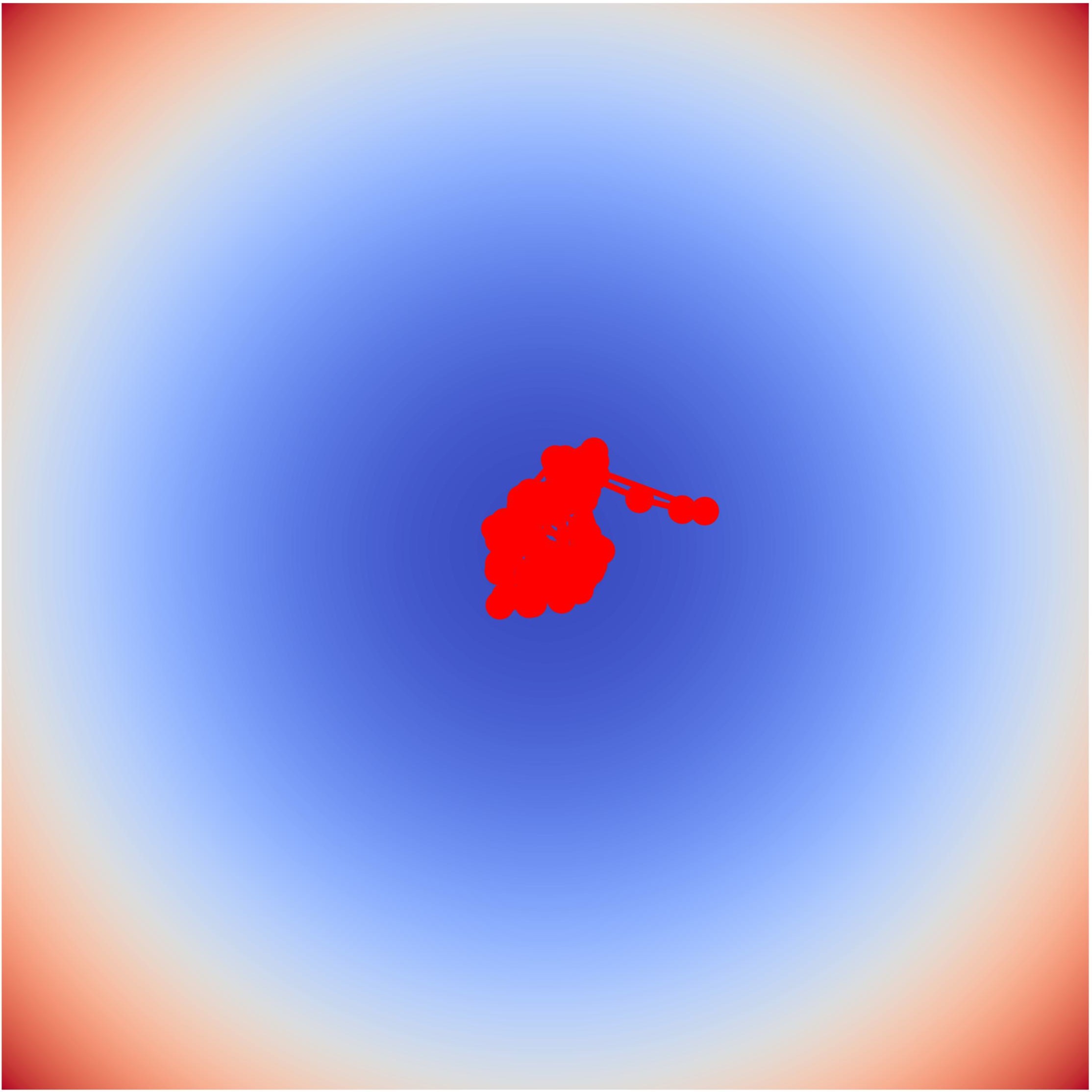}
        \caption{Cost-unaware}
        \label{fig:ucb_quad}
    \end{subfigure}
    \hfill
    \begin{subfigure}{0.24\linewidth}
        \centering
        \includegraphics[width=\textwidth]{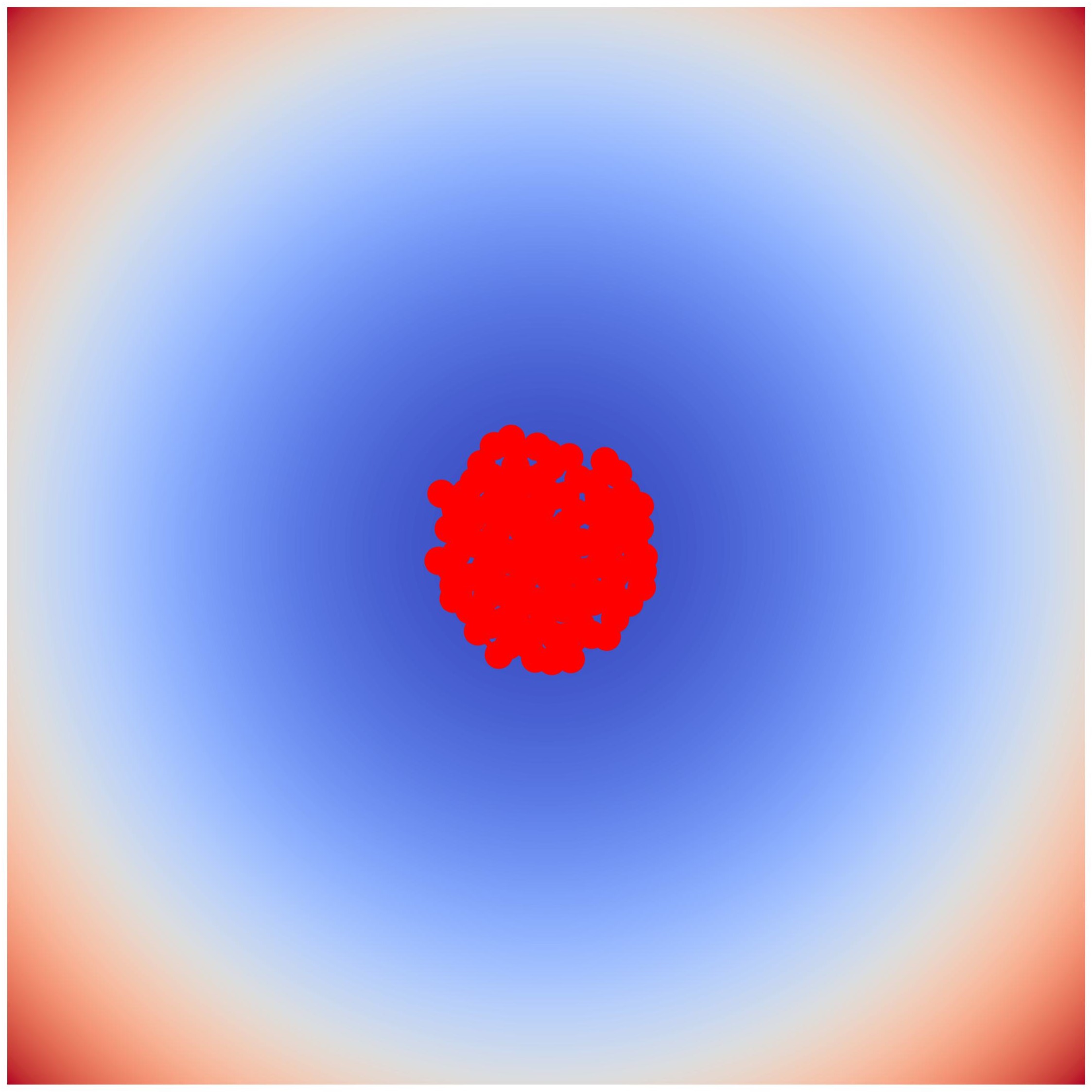}
        \caption{Cost-aware}
        \label{fig:tpe_quad}
    \end{subfigure}
    \hfill
    \begin{subfigure}{0.24\linewidth}
        \centering
        \includegraphics[width=\textwidth]{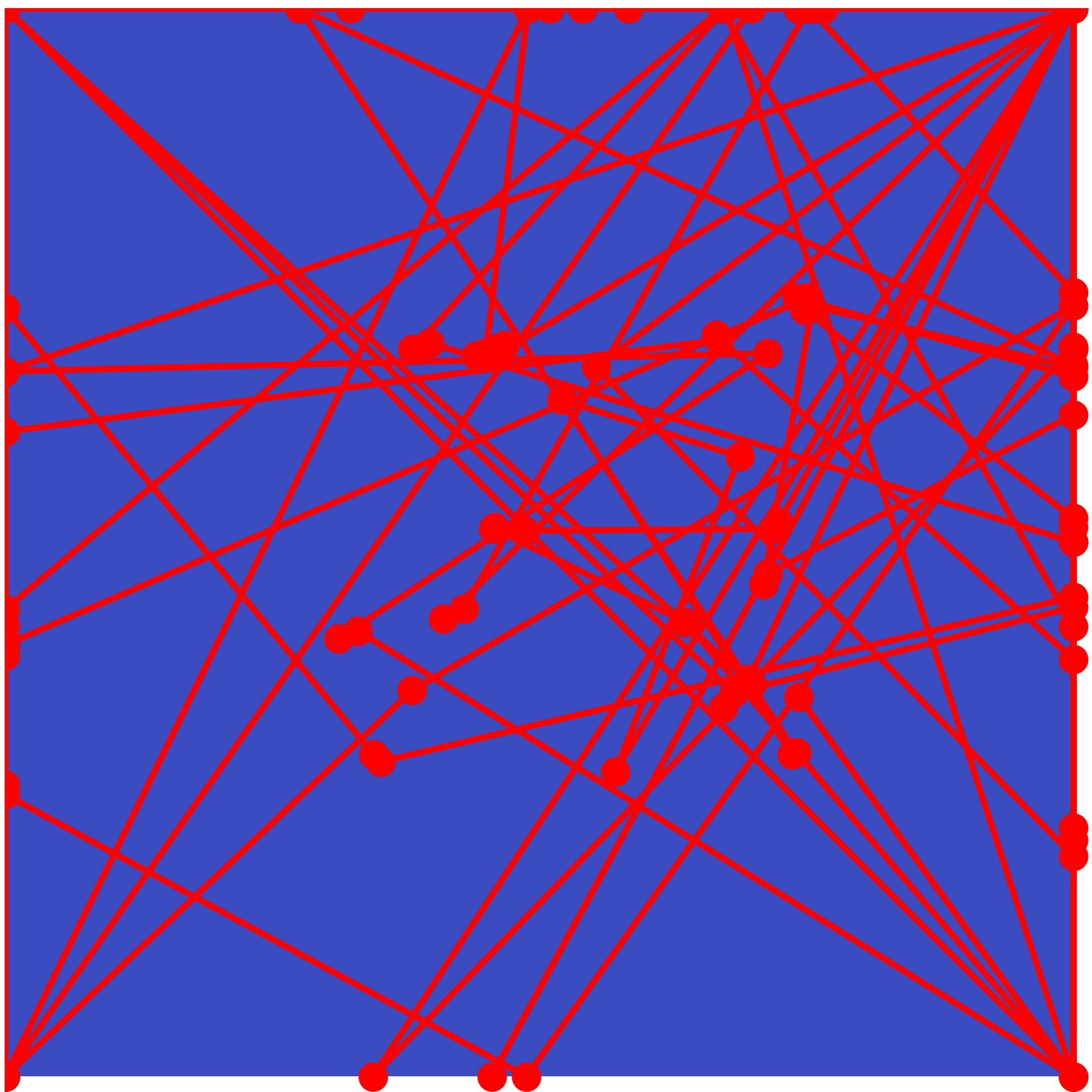}
        \caption{Cost-unaware}
        \label{fig:ucb_con}
    \end{subfigure}
    \hfill
    \begin{subfigure}{0.24\linewidth}
        \centering
        \includegraphics[width=\textwidth]{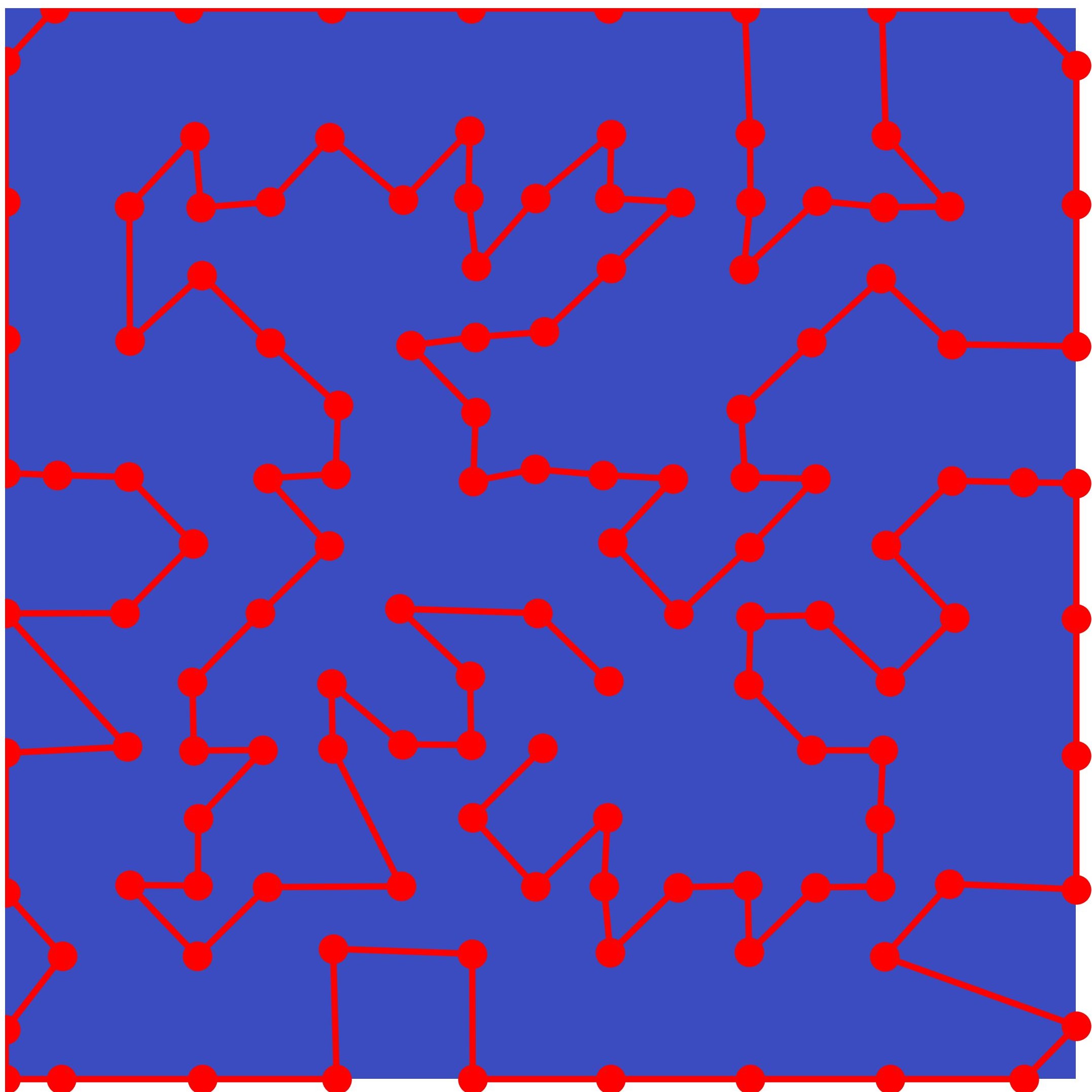}
        \caption{Cost-aware}
        \label{fig:tpe_con}
    \end{subfigure}
    \caption{Sample path of a movement cost-unaware vs. our cost-aware method.}
    \label{fig:quadvscon}
\end{figure} 

In the context of black-box optimization, the hardness of the problem is unknown in advance, so global optimization algorithms should always be prepared to handle a scenario where there are multiple, often comparable, local solutions.  Despite a few attempts to integrate movement costs into BO \citep{samaniego2021bayesian, folch2022snake, shi2023bayesian, liu2023bayesian}, these methods fail to be global optimization approaches due to their myopic and greedy nature. Specifically, existing methods penalize the acquisition function based on movement costs. Typical algorithms of this kind includes \texttt{EIpu} \citep{lee2021nonmyopic}, \texttt{GS} \citep{shi2023bayesian}, and \texttt{distUCB} \citep{liu2023bayesian}. Despite the simplicity of the idea, penalizing exploration with movement costs makes the algorithm overly myopic. When movement costs are high, such penalization significantly reduces exploration and ultimately prevents a method from functioning as a global optimization algorithm.

For instance, consider a function with two far-away solutions where one (in green) is better than the other (in red), shown in Fig.\ref{fig:penalty}. Suppose we start with the worse solution, so one should expect that any effective algorithm should eventually bring us to the other local optimal solution. However, for myopic methods, as long as the distance between the two local optimal solutions is far enough (i.e. large movement costs), exploration could be over-penalized so that the algorithm never visits the other solution.

\begin{figure}[htbp!]
    \centering
    \includegraphics[width=0.75\linewidth]{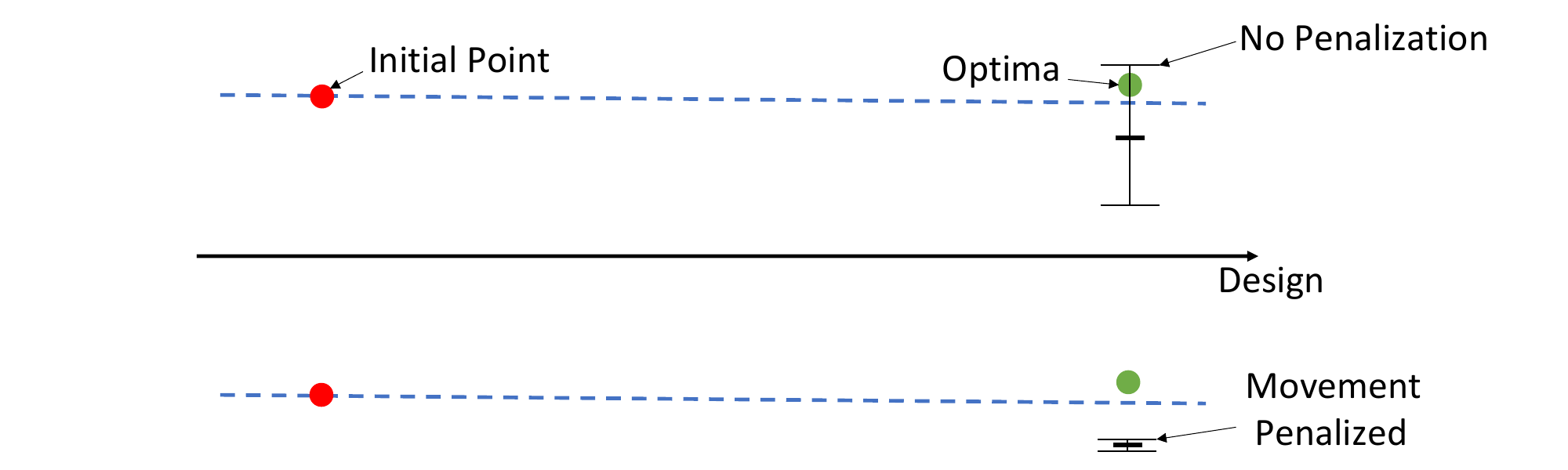}
    \caption{The Failure of Movement Penalization}
    \label{fig:penalty}
\end{figure}

The possibility of having multiple local solutions poses a fundamental challenge for BO with movement costs. Traveling back and forth through all the possible solutions can be very costly, and trading exploration to save movement can make the algorithm stuck at local optimal solutions and fail to approach global optimality. Instead of trying to balance exploration and movement, we take an alternative route by decoupling the optimization of the function (measured in terms of regret) and the movement costs. To this end, we introduce a universal framework for BO under movement costs that is both powerful and simple in design. It involves only two iterative steps. The first step focuses on optimizing the black-box function by selecting a set of designs that maximize a user-defined acquisition function. The second step focuses on minimizing movement costs by planning a route through the selected designs. 

The contribution of this paper is summarized as follows: (1) We present an easy-to-implement BO approach to save on movement costs. (2) We show that this simple framework enjoys elegant theoretical guarantees (Corollary \ref{thm:general_move_costs}), where the average movement costs approach zero asymptotically without compromising regret performance. (3) We validate the algorithm's practical effectiveness empirically across various acquisition and test functions, confirming its superior practicality compared to state-of-the-art methods. (4) Beyond BO, our method can be readily extended to many other online decision-making problems, such as stochastic bandit problems. We provide two examples of multi-armed and Lipschitz bandits. In both settings, the cumulative movement costs are asymptotically less than the cumulative regret, achieving \textit{optimal} rates for the cumulative loss (i.e., cumulative regret plus cumulative movement costs) in these two problems.

\section{Related Work}
\label{sec:lit}
\paragraph{BO with Movement Costs}
BO has been a popular tool for the global optimization of black-box functions since \citet{kushner1963new}. Recently, there has been an increasing interest in BO with movement costs \citep{samaniego2021bayesian,folch2022snake} as it factually models the working conditions in many real-world settings \citep{samaniego2021bayesian,hellan2022bayesian,ramesh2022movement}. Recent methods have been proposed to solve this problem, including \texttt{SnAKe} \citep{folch2022snake}, \texttt{GS} \citep{shi2023bayesian}, and \texttt{distUCB} \citep{liu2023bayesian}. Along this line, movement costs are sometimes also considered a constraint \citep{paulson2022efficient,paulson2023lsr} rather than a target to minimize. However, we note that our problem setting does not belong to a constrained BO setting \citep{bernardo2011optimization} since its constraints depend not only on the design but also on the history of its visited designs. 

The algorithm proposed in this paper is closest to the idea of \texttt{SnAKe} \citep{folch2022snake}, which also plans paths for the future. However, \texttt{SnAKe} uses the planned paths in a fundamentally different way. First, their path is planned for the rest of the entire horizon, which can be challenging when the horizon is infinite or not known in advance. Second, \texttt{SnAKe} does not finish the path they planned but switches to a new path when new observations are available. Consequently, \texttt{SnAKe} converges neither theoretically nor empirically. In contrast, this paper provides a convergence guarantee for the BO with movement costs. As will become clear shortly, our guarantees hinge on a simple yet elegant result for shortest-path planning.

We acknowledge that there is another line of BO research, although initially aimed at reducing the Gaussian process (\GP{}) estimation complexity, can also help reduce movement costs. Namely, the \texttt{MINI} algorithm \citep{calandriello2022scaling} is designed to learn a \GP{} by evaluating a few unique candidate designs multiple times when learning a \GP{}. As a by-effect, evaluating only few candidates automatically results in low movement costs. However, empirically, this results in a suboptimal estimation of the black-box function landscape and kernel hyperparameters. Also, \texttt{MINI} is specific to \GP{} surrogates only, while our method is more generic and suits a much broader class of black-box optimization problems. 

Although we focus on BO in this paper, our result has a broader impact and can be applied to generic online decision problems.  Two special cases of the problem described in Sec.\ref{sec:setup} are also studied under the umbrella of bandits, namely multi-armed (MAB) and Lipschitz bandits.  MAB with movement costs was first studied as switching costs \citep{cesa2013online,dekel2014bandits, rouyer2021algorithm, steiger2023constrained}, where the metric space is a connected finite graph with the distance between all pairs of vertices being $1$. Unlike our setting, this literature studies adversarial MAB \citep{cesa2013online,dekel2014bandits}.

For stochastic bandit problems, the analysis of switching costs can extended to movement costs \citep{guha2009multi, koren2017bandits, koren2017multi}, where the costs of input changes are defined by a known metric on the input space. Among them, \citet{koren2017multi} studied movement costs for Lipschitz bandits in a bounded metric space of $d-$Minkowski dimension and obtained a tight bound. This paper falls in this category. We follow the convention of \citet{koren2017multi} and analyze the movement costs using the Minkowski dimension. Interestingly, our algorithm matches the same bound for Lipschitz bandits with movement costs as \citet{koren2017multi}.

\paragraph{Batched Algorithms}
A key contribution of our work is pointing out the automatic convergence of movement costs if actions are planned ahead.  Such algorithms that plan \textit{multiple} future actions ahead are better known as batched algorithms. Batched algorithms are initially designed to parallelize experiments, where one needs to determine multiple designs to observe simultaneously without knowing their outcomes. Batched algorithms have long been an active research topic in BO \citep{azimi2010batch,duan2017sliced,hunt2020batch}. Although batched algorithms are initially designed for batched feedback, they can be readily applied to non-batched settings. Recently, \cite{li2022gaussian} used a batched algorithm to provide a first tight regret upper bound for non-batched settings. The success of these algorithms implies that planning decisions in batches can achieve regret bounds similar to traditional non-batched algorithms, which makes our method possible.

We note that despite the fact that we use batched algorithms, our problem is \textbf{not} limited to batched settings. Specifically, we do allow the algorithm to see the outcome of the tested design instantly before it can make another decision. The purpose of using a batched algorithm in a non-batched setting is only to find a set of good candidate designs so that we can plan a route to save movement. In the experiment section, all the algorithms are benchmarked in the environment where each single outcome is revealed immediately after testing. 

\paragraph{Traveling Salesman Problem}
Our algorithm requires solving a Traveling Salesman Problem (TSP) in its sub-routine. TSP involves finding the shortest path on a graph that goes through all the vertices of the graph. It is well-known that TSP is an NP-hard problem. However, when the distances follow triangular inequalities, the problem is known as metric-TSP and can be solved to $3/2$-optimal in polynomial time \citep{christofides1976worst}. This constant approximation is sufficient for our theoretical purposes. In practice, other heuristic algorithms \citep{lin1973effective} might offer better solution quality and shorter running times, so we leave the choice to practitioners

\section{Setting the Stage}
\subsection{Problem Setup}
\label{sec:setup}
Let $(M,c)$ be a metric space where set $M$ is endowed with a distance metric $c$. Consider a real-valued black-box function $f: M \rightarrow \mathbb{R}$. The goal of the entity running the experiments is to find the maximizer of the black-box function $f$ within a budget of $T$ experiments. Hereafter, we refer to this entity as the traveler. Let the $t^{\text{th}}$ destination and its associated observation be $\bm{x}_t$ and $y_t$, respectively. Denote the history as a dataset $\mathcal{D}_{t}\triangleq \{(\bm{x}_1,y_1),\dots,(\bm{x}_t,y_t)\}$.  Realistically, the observation $y_t$ can be noisy such that $y(\bm{x}_{t}) \triangleq f(\bm{x}_{t})+\varepsilon_{t}$, where $\varepsilon_{t}$ are R-sub-Gaussian random variables with $\mathbb{E}[\varepsilon_{t}|\mathcal{D}_{t-1}]=0$. 

In addition to a conventional setting, the traveler also needs to pay a movement cost $c(\bm{x}_{t-1},\bm{x}_{t})$ to travel from one location $\bm{x}_{t-1}$ to the next destination $\bm{x}_{t}$. Therefore, this is essentially a multiple-objective problem, where the traveler's goal is to apply a policy $\pi$ that not only maximizes the function values but also minimizes the movement costs.  

\subsection{BO with a Gaussian Process Surrogate} \label{sec:bo}

BO with a \GP{} surrogate, also known as \GP{} bandit optimization, is an optimization policy that selects designs at each round $t$ by optimizing an acquisition function $\alpha(\bm{x}_{t};\mathcal{D}_{t-1})$, which is calculated based on a \GP{}-based posterior belief of the underlying black-box function. For the most part of this paper, we limit the discussion and theoretical analysis to BO with a $\mathcal{GP}$ posterior, although the results can be readily adapted to other surrogates and even other stochastic problem settings (see other examples in Sec.\ref{sec:converge}). 

A $\mathcal{GP}$ is a nonparametric model defined by a positive-definite kernel $k(\cdot,\cdot)$ and the past observations $\mathcal{D}_{t}$. The kernel $k$ imposes a functional prior on the black box $f$, which is updated to a posterior belief by conditioning on the observed data $\mathcal{D}_{t}$. Now, for simplicity, we set up some shorthand notations. At time $t$, define $\bm{y}_{1:t}\triangleq [y_1,\dots,y_t]^\intercal$, $\bm{k}_{t}(\bm{x})= [k(\bm{x},\bm{x}_{1}),\cdots, \bm{k}(\bm{x},\bm{x}_{t})]^\intercal$ and the Gram matrix $\bm{K}_t=[\bm{k}_{t}(\bm{x}_{1}), \dots, \bm{k}_{t}(\bm{x}_{t})]$. By a $\mathcal{GP}(\mathcal{D}_{t})$ construction, we have that: 
\[\begin{bmatrix}
    \bm{y}_{1:t}\\
    f(\bm{x})
\end{bmatrix}
~\sim~
\mathcal{N}\left(\bm{0},
\begin{bmatrix}
    \bm{K}_t + \lambda I & \bm{k}_{t}(\bm{x})\\
    \bm{k}_{t}(\bm{x})^\intercal& k(\bm{x},\bm{x})
\end{bmatrix}\right)
,\]
where $\bm{y}_{1:t} = [y_{1},\dots,y_{t}]^\intercal$ and $\lambda$ is a nugget effect that provides regularization and alleviates numerical problems for $\mathcal{GP}$s \citep{matheron1963principles}. By marginalizing all the possible function realizations, we get a posterior belief $f(\bm{x})\mid \mathcal{D}_{t} \sim \mathcal{N}(\mu_{t}(\bm{x}), \sigma^2_{t}(\bm{x}))$, where 
\[\mu_{t}(\bm{x}) =  \bm{k}_t(\bm{x})^\intercal (\bm{K}_t+ \lambda \bm{I})^{-1} \bm{y}_{1:t},\] \[\sigma^2_{t}(\bm{x}) =  k(\bm{x},\bm{x}) - \bm{k}_t(\bm{x})^\intercal (\bm{K}_t+ \lambda I)^{-1}\bm{k}_t(\bm{x})\label{eq: var}.\]

For theoretical purposes, we make the following assumption on the black-box function. 
\begin{assumption}
    \label{asm:kernel}
Assume the black-box function $f$ lies in the reproducible kernel Hilbert space (RKHS, $\mathcal{H}_k$) associated with kernel $k$. The RKHS norm of function $f$ is less than $B$ (i.e., $\|f\|_{\mathcal{H}_k} =\sqrt{\langle f,f\rangle_{\mathcal{H}_k}} \leq B$). We also, without loss of generality, assume the kernel function is normalized such that $k(\bm{x},\bm{x}) \leq  1,\forall \bm{x} \in M$. 
\end{assumption}

\subsection{Convergence of Algorithms}
A desired algorithm in this problem should have both the optimality gap and the movement cost converging to zero asymptotically. To capture this multi-objective nature, we evaluate the performance of an algorithm in terms of regret, which is the gap between the choice of the actual algorithm and the best choice of a clairvoyant. Let $f^*=\max_{\bm{x}\in M} f(\bm{x})$ be the ground truth of the optimal location $\bm{x}^*$, and define the instantaneous regret as the gap between the optimal location and a selected destination, i.e., $r_t = f^* - f(\bm{x}_t)$. 

One sufficient condition used widely \citep{srinivas2009gaussian} to prove asymptotic optimality is showing the sublinear convergence of the cumulative regret, which is defined as $R_T = \sum_{t=1}^T r_t$. To see this, note that if $R_T$ grows sublinearly, a subsequence must exist where the optimality gap $r_t$ converges to zero.

Similarly, we define the cumulative movement costs as $C_{T} = \sum_{t=1}^T c(\bm{x}_{t-1},\bm{x}_{t})$, where $\bm{x}_0$ is the initial location of the traveler. Our analysis tracks the two objectives separately to provide sufficient granularity for each. Such granularity can be beneficial when the two objectives cannot be directly summed up. Yet, if regrets and movement costs are measured on the same scale, it suffices to keep track of the cumulative loss, i.e.,
\(L_T = R_T + C_T.\)

To keep our analysis non-trivial, we assume the traveler can reach anywhere in the space $(M, c)$ from its origin with finite costs. Otherwise, we can always ignore the space that cannot be reached by the traveler within a finite budget. 
\begin{assumption}
\label{asm:physical}
     $(M, c)$ is a totally bounded and connected metric space that contains $\bm{x}_0$. For any two points $\bm{x}_i,\bm{x}_j\in M$, we can transport $\bm{x}_i$ to $\bm{x}_j$ with costs $c(\bm{x}_i,\bm{x}_j)<\infty$. 
\end{assumption}

\subsection{Batched Algorithms}
\label{sec:batch}
Usually, a BO algorithm chooses $\bm{x}_t$ using all the information in $\mathcal{D}_{t-1}$. That is, at each iteration, the decision at the next timestamp is given by a policy $\bm{x}_t = \pi(\mathcal{D}_{t-1})$ that takes current information $\mathcal{D}_{t-1}$ and outputs \textit{one} decision. For example, (single-point) Thompson Sampling (TS) can be treated as a policy 
\[\bm{x}_t=\pi_{\texttt{TS}}(\mathcal{D}_{t-1})=\arg\max(g), \quad g\sim \mathcal{GP}(\mathcal{D}_{t-1}).\]
However, sometimes (even if responses do not come in batches), it can be beneficial to make multiple decisions before observing the results. A batched algorithm picks multiple destinations simultaneously before visiting them. In this case, the policy $\pi^{\mathcal{B}}$ takes $\mathcal{D}_{t-1}$ but outputs \textbf{multiple} points. As a result, instead of making one decision at every timestamp, one now makes a batch of decisions at only $N(T)\leq T$ timestamps, where we use $N(T)$ to denote the number of batches. We denote the timestamp where batched decisions are made as $\mathcal{T}= \{t_0,t_1,\dots,t_{N(T)}\}$, with $1 = t_0 \leq t_1 < \dots < t_{N(T)} = T$. At every time $t_{i-1}$, a batched algorithm chooses the next $t_i-t_{i-1}$ destinations, denoted as $\mathcal{S}_i:=\{\bm{x}_{t_{i-1}},\dots, \bm{x}_{t_{i}}\}$.

For example, if we were to plan a batch from $t_{i-1}$ to $t_{i}$ using the batched TS \citep{kandasamy2018parallelised}, the policy $\pi^\mathcal{B}_{\texttt{TS}}$ becomes
\begin{equation}
\label{eq:ts}
    \mathcal{S}_i = \pi^{\mathcal{B}}_{\texttt{TS}}(\mathcal{D}_{t_{i-1}-1})=  \{\arg\max(g_\tau) :\tau\in t_{i-1}\leq \tau<t_i\}, \quad 
    g_\tau \sim \mathcal{GP}(\mathcal{D}_{t_{i-1}-1}).
\end{equation}

\subsection{Asymptotic Notations}
We use asymptotic notation to denote the dependency on the number of samples $T$. Among them, $O(\cdot)$ denotes the asymptotic upper bound for a specified algorithm. $\Omega(\cdot)$ denotes the asymptotic upper bound for all algorithms. $\Theta(\cdot)$ implies both $O(\cdot)$ and $\Omega(\cdot)$. $o(\cdot)$ denote asymptotically less. We also use $\sim$ to suppress any logarithmic dependencies on $T$ that do not involve dimension $d$.

\section{A Simple Framework: Plan More and Move Less}
\label{sec:method}
The proposed framework is inspired by an elegant theorem. In this section, we first present our framework in Sec.\ref{sec:alg}. Then, in Sec.\ref{sec:thm}, we introduce the main theorem and how it applies to our algorithm. Finally, we present the convergence result for three examples of batched algorithms.

\subsection{Plan-ahead with Batched Algorithms}

The core concept of our BO approach is straightforward: leverage any batched algorithm to identify a set of promising design candidates, then optimize the selection by taking a TSP tour through them. This process is outlined in Algorithm \ref{alg:plan}.

\label{sec:alg}
\begin{algorithm}[htb]
   \caption{Plan-ahead with Batched Algorithms}
   \label{alg:plan}
\begin{algorithmic}
   \State{\bfseries Input:} Metric space $(M,c)$, a batched algorithm $\pi^\mathcal{B}$
   \For{Batch index $i=1,\dots, N(T)$}
   \State Select a batch of designs $\mathcal{S}_i$ according to some user selected $\pi^\mathcal{B}$.
   \State Create a graph $\mathcal{G}_i$ using $\mathcal{S}_i$ and $c$.
   \State Plan a TSP path on $\mathcal{G}_i$;
   \State Observe designs in $\mathcal{S}_i$ following the TSP path;
   \EndFor
\end{algorithmic}
\end{algorithm}
Our framework seamlessly integrates with any existing batched algorithm. For any given batch algorithm $\pi^\mathcal{B}$, our method only adds one additional step by planning a TSP tour through the points in the batch (see Algorithm \ref{alg:plan}). Recall the notation from Sec.\ref{sec:batch} that $\mathcal{S}_i=\pi^\mathcal{B}(\mathcal{D}_{t_{i-1}-1})$. For each batch $i$, the traveler chooses a batch of points $\mathcal{S}_i$ according to some policy $\pi^\mathcal{B}$. Note that there is no restriction of $\pi^\mathcal{B}$ as long as it is a batched algorithm. Some valid examples of $\pi^\mathcal{B}$ include the batched TS \citep{kandasamy2018parallelised}, batched UCB \citep{chowdhury2019batch}, and batched pure exploration (BPE) algorithm \citep{li2022gaussian}. Again, we note that the use of a batched algorithm does not require responses to come in batches. With the selected batch $\mathcal{S}_i$, it suffices to consider only the space $M$ restricted to $\mathcal{S}_i$. This is essentially a graph $\mathcal{G}_i$ where the vertices are the unique destinations in $\mathcal{S}_i$, and the distances between vertices are defined by the given metric $c$. Then, the traveler finds a path through all the vertices in $\mathcal{G}_i$ by solving a TSP. After planning a TSP path, the traveler simply travels through the destinations following the TSP path. After visiting all destinations in $\mathcal{S}_i$, the traveler heads to the next iteration by selecting a new batch. 

\subsection{Practical Batched BO with Successive Elimination}
\label{apx:gpbase}

As shown above, our approach to reducing movement requires a batched algorithm. However, despite their theoretical success, it has been noted that non-batched algorithms could have faster convergence on practical problems compared to batched algorithms \citep{de2021greed}. This is intuitively understandable, as batched algorithms tend to prioritize diversity in the sampled batch, making them less greedy in terms of exploitation. In contrast, greedy and exploitative algorithms can converge more quickly in practice.

To enhance the practical performance of our algorithm, we incorporate the idea of successive elimination, which is a standard trick in bandit literature \citep{lattimore2020bandit}. The key idea of successive elimination is to avoid the exploration of hopeless designs. The successive elimination relies on a simple rule: no matter how the batched algorithm tries to promote exploration, it is meaningless to spend experiments on a point that is suboptimal with high probability.

Formally, let the lower and upper confidence bounds of the $\mathcal{GP}(\mathcal{D}_i)$ be $\texttt{LCB}_i(\cdot)$ and $\texttt{UCB}_i(\cdot)$ respectively, i.e., 
\begin{equation}
    \texttt{UCB}_t(\bm{x}) = \mu_{t}(\bm{x}) + \eta_t \sigma_{t}(\bm{x}),\quad 
    \texttt{LCB}_t(\bm{x}) = \mu_{t}(\bm{x}) - \eta_t \sigma_{t}(\bm{x}),
\end{equation}
where $\eta_t$ is a hyper-parameter that controls the level of exploration. Now, for any point $\bm{x}$, if there exist a point $\bm{x}'$ such that $\texttt{UCB}_t(\bm{x})<\texttt{LCB}_t(\bm{x}')$, then $\bm{x}$ cannot be the optimal solution with high probability. As such, we exclude them from our candidate set $\mathcal{X}_{i+1}$ for the next batch, which is updated with the following rule:
\begin{equation}
    \mathcal{X}_{i+1} = \{\bm{x}\in \mathcal{X}_{i}: \texttt{UCB}_{t_i}(\bm{x}) > \max_{\bm{x}} \texttt{LCB}_{t_i}(\bm{x})\}.
\end{equation}
In our experiments, $\eta_t$ is set to be $\beta_t/2$ to avoid hypermeter tuning in the algorithm, but users can choose them for practical purposes. 

\begin{algorithm}[htbp!]
   \caption{Batched Successive Elimination for $\mathcal{GP}$}
   \label{alg:BSE}
\begin{algorithmic}[1]
   \State{\bfseries Input:} Metric space $(M,c)$, kernel $k$, grid $\mathcal{T}$, policy $\pi^\mathcal{B}$
   \State $\mathcal{X}_1 \gets M$, $t\gets 1$, $\mathcal{D}_{0}\gets \emptyset$.
   \For{Batch $i=1,\dots,N(T)$}
   \State Select a batch of designs \textcolor{red}{$\mathcal{S}_i \subset \mathcal{X}_i$} according to some user select $\pi^\mathcal{B}$.
   \State Create a graph $\mathcal{G}_i$ using $\mathcal{S}_i$ and $c$.
   \State Plan a TSP path on $\mathcal{G}_i$;
   \State Observe designs in $\mathcal{S}_i$ following the TSP path;
   \textcolor{red}{
   \State Create model $\mathcal{GP}(\mathcal{D}_{t_i})$ for $\mu_{t_i}(\cdot)$ and $\sigma_{t_i}(\cdot)$;
   \State $\mathcal{X}_{i+1}\gets\{\bm{x}\in \mathcal{X}_i: \texttt{ucb}_{t_i}(\bm{x}) > \max_{\bm{x}} \texttt{lcb}_{t_i}(\bm{x}) \}$.
   }
   \EndFor
\end{algorithmic}
\end{algorithm}

The only difference between Algorithm \ref{alg:BSE} and \ref{alg:plan} is that Algorithm \ref{alg:BSE} takes designs only in the set $\mathcal{X}_i$, which has been labeled in purple in Algorithm \ref{alg:BSE}. To implement it, the only difference compared to Algorithm \ref{alg:plan} is that the feasible region for acquisition function optimization is now $\mathcal{X}_i$ rather than the whole space $M$. 

\section{Theoretical Results: The More You Plan, the Less You Travel}
\label{sec:thm}
The effectiveness of Algorithm \ref{alg:plan} was powered by one elegant property of shortest paths. It is known that for arbitrary $n$ points in a unit cube in the Euclidean space $\mathbb{R}^d$, there exists a path through the $n$ points whose length (measured in Euclidean length) does not exceed $O(n^{1-\frac{1}{d}})$ \citep{few1955shortest,beardwood1959shortest}. This interesting property implies that although we can easily construct examples where the path length grows linearly with the number of points, the shortest path through them is only sublinear. To see why this is true intuitively, consider the simplest case where the points lie in a unit interval $[0,1]$. No matter how many points one has, we can always visit them all by traveling from $0$ to $1$ with only a unit cost. 

Now, to utilize this idea for our purpose, we prove a generalized version in Theorem \ref{thm:general} such that it holds for any non-trivial metric spaces. For theoretical development, we give a rigorous definition of our dimension $d$. 

\subsection{Minkowski Dimension}
Minkowski Dimension is a generalization of dimensions in Euclidean spaces and is widely used to describe the complexity of a metric space \citep{bishop2017fractals}. We built our theoretical results on the Minkowski Dimension to align with previous works \citep{koren2017multi}.
\begin{definition}[Balls in Metric Spaces]
    For metric space $(M,c)$, a \textit{ball} centered at $\bm{z}$ with radius $\epsilon>0$ is a set of points satisfying
    \(\mathcal{B}_{\epsilon}(\bm{z}) = \{\bm{x}\in M: c(\bm{x},\bm{z}) < \epsilon\}.\)
\end{definition}

\begin{definition}[Minkowski Dimension by Covering]
\label{def:dim}
    For a set of points $Z \subset M$, we define $N(\epsilon)\triangleq \inf |Z|$ satisfying
    \(M\subseteq \bigcup_{\bm{z}_i\in Z} \mathcal{B}_{\epsilon}(\bm{z}_i).\)
    We define the \textit{dimension} of $(M,c)$ as 
    \[d \triangleq \lim_{\epsilon\to 0}\frac{\ln(N(\epsilon))}{\ln(1/\epsilon)}.\]
\end{definition}

\begin{definition}[Packing Number]
\label{def:pack}
    For metric space $(M,c)$, we define the packing number of $M$ as $N^\texttt{p}(\epsilon)\triangleq \sup_{Z\subseteq \mathcal{Z}} |Z|$ where 
    \(\mathcal{Z} = \{Z\mid \forall \bm{z}_i, \bm{z}_j \in Z,\,\mathcal{B}_{\epsilon}(\bm{z}_i) \cap \mathcal{B}_{\epsilon}(\bm{z}_j) = \emptyset,\,\bigcup_{\bm{z}_i\in Z} \mathcal{B}_{\epsilon}(\bm{z}_i) \subseteq M\}.\)
\end{definition}
\begin{remark}
\label{rmk:packcover}
    Covering and packing are two alternative and closely related definitions for the Minkowski dimension. It is known that
    \(N^\texttt{p}(\epsilon) \leq N(\epsilon) \leq N^\texttt{p}(\epsilon/4)\) \citep{bishop2017fractals}.
\end{remark}

\begin{theorem}
    \label{thm:general}
    Fix any metric space $(M,c)$ of dimension $d$ satisfying Assumption \ref{asm:physical}. For any set of $n$ points in $M$, we can find a path through all of them whose length is at most 
    \(O(n^{1-\frac{1}{d}}).\)
    This bound is tight in the sense that for any such metric space, there always exists $n$ points such that the shortest path connecting them is at least \(\Omega(n^{1-\frac{1}{d}}).\)
\end{theorem}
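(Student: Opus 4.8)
The plan is to prove the two halves of the statement separately: the upper bound by a covering/space-partition argument and the lower bound by a packing argument. For the upper bound, fix $n$ points in $M$. Choose a scale $\epsilon = \epsilon(n)$ to be optimized at the end, and take a minimal $\epsilon$-cover of $M$ by $N(\epsilon)$ balls of radius $\epsilon$; by Definition~\ref{def:dim}, $N(\epsilon) = \epsilon^{-d+o(1)}$ as $\epsilon\to 0$. Assign each of the $n$ points to one ball of the cover that contains it, so the points are partitioned into at most $N(\epsilon)$ groups, each of diameter at most $2\epsilon$. The route is then built hierarchically: within each nonempty group, visit all the points assigned to it by an arbitrary ordering, which costs at most $(\text{group size})\cdot 2\epsilon$ since consecutive points are within $2\epsilon$ of each other; summing over all groups, the total intra-group cost is at most $2\epsilon n$. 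To stitch the groups together, we need a tour through the (at most $N(\epsilon)$) ball-centers. Here I would invoke a recursive/self-similar version of the same statement at a coarser scale, or more simply bound the inter-group travel using Assumption~\ref{asm:physical} and the fact that $M$ is totally bounded of diameter $\mathrm{diam}(M) = D < \infty$: the inter-group tour costs at most $O\big(N(\epsilon)\cdot D\big)$ in the crudest form, but to get the right exponent one applies the theorem recursively to the $N(\epsilon)$ centers, giving inter-group cost $O\big(N(\epsilon)^{1-1/d}\big) = O\big(\epsilon^{-(d-1)+o(1)}\big)$. Balancing $\epsilon n$ against $\epsilon^{-(d-1)}$ gives $\epsilon \asymp n^{-1/d}$ and total length $O(n^{1-1/d})$, up to the $o(1)$ in the exponent coming from the definition of $d$ (which I would absorb into the $O(\cdot)$, or state the bound as $O(n^{1-1/d+o(1)})$ and remark that with a box-counting/Assouad-type regularity it is exactly $n^{1-1/d}$).

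For the lower bound, I would use the packing number and Remark~\ref{rmk:packcover}. Fix $\epsilon = n^{-1/d}$ (up to constants) so that the packing number $N^{\texttt p}(\epsilon) \geq n$; pick $n$ points that are pairwise at distance at least $2\epsilon$ (the centers of a maximal $\epsilon$-packing). Any path through these $n$ points must, between consecutive visited points, travel a distance of at least $2\epsilon$, so the total length is at least $(n-1)\cdot 2\epsilon = \Omega(n\cdot n^{-1/d}) = \Omega(n^{1-1/d})$. The relation $N^{\texttt p}(\epsilon) \leq N(\epsilon) \leq N^{\texttt p}(\epsilon/4)$ together with Definition~\ref{def:dim} guarantees that $N^{\texttt p}(\epsilon) = \epsilon^{-d+o(1)}$, which is what makes the choice $\epsilon \asymp n^{-1/d}$ valid (for infinitely many $n$, i.e., those realized as packing numbers; intermediate $n$ follow by monotonicity).

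The main obstacle is the inter-group stitching in the upper bound: one cannot naively recurse because the centers of an $\epsilon$-cover do not themselves form a space of the same Minkowski dimension $d$ in a self-contained way — the recursion has to be set up carefully, e.g., by a single-scale geometric decomposition into $\sim 1/\epsilon$ "slabs" or a dyadic hierarchy of covers at scales $\epsilon, 2\epsilon, 4\epsilon, \ldots, D$, and then summing a geometric-type series $\sum_j N(2^j\epsilon)\cdot 2^j\epsilon$ over the levels. Controlling this sum and showing it is dominated by the finest scale (so that it is $O(N(\epsilon)\cdot\epsilon) = O(\epsilon^{-(d-1)})$ rather than blowing up) is where the dimension hypothesis does the real work, and is the step I expect to require the most care. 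The rest — the intra-group bound, the balancing of $\epsilon$, and the entire lower bound — is routine given Definition~\ref{def:dim} and Remark~\ref{rmk:packcover}.
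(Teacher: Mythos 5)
Your lower bound is correct and essentially the paper's (the paper routes it through the fact that the MST length lower-bounds any TSP path, you use the pairwise-separation bound directly; note only that in a general metric space disjointness of two $\epsilon$-balls guarantees centers at distance at least $\epsilon$, not $2\epsilon$, which affects constants only). The intra-group bound and the balancing $\epsilon \asymp n^{-1/d}$ are also fine. The genuine gap is exactly the step you flagged and left open: the tour through the ball centers. The recursion you first propose is circular --- the $N(\epsilon)\approx n$ centers are a point set of the same cardinality in the same space, so ``apply the theorem to them'' is the statement being proved --- and the dyadic-hierarchy alternative is only named, not executed; carrying it out requires uniform control of the $o(1)$ in $N(\delta)=\delta^{-d+o(1)}$ across all scales and a domination-by-the-finest-scale argument that needs $d>1$, none of which is in your sketch. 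As written, the upper bound is a plan, not a proof.

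The paper closes this hole with a single-scale argument that makes essential use of connectedness (Assumption \ref{asm:physical}), which your proposal never invokes. Take a \emph{minimal} cover of $M$ by $n$ open balls of radius $\epsilon=O(n^{-1/d})$ (Definition \ref{def:dim}) and run Prim's algorithm on the centers. While the tree $\mathcal{T}_\tau$ is incomplete, minimality implies the balls already in the tree do not cover $M$, and connectedness yields a boundary point $a$ with $c(z_i,a)=\epsilon$ for some center $z_i\in\mathcal{T}_\tau$; since $a$ is uncovered by $\mathcal{T}_\tau$'s balls, it lies in some ball $\mathcal{B}_\epsilon(z_j)$ with $z_j\notin\mathcal{T}_\tau$, so the path $z_i\to a\to z_j$ has length less than $2\epsilon$. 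Hence every Prim step costs at most $2\epsilon$, the MST of the centers has weight at most $2n\epsilon$, the TSP tour through the centers is at most $4n\epsilon$ (tour $\leq 2\times$MST), and attaching each of the $n$ given points to the center of a ball containing it via a $2\epsilon$ detour adds $2n\epsilon$, for a total of $6n\epsilon=O(n^{1-\frac{1}{d}})$. In short: the connectedness of $M$ is what replaces the hierarchical stitching you anticipated needing, and without it (or a completed multi-scale argument) your upper bound does not go through.
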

\begin{proof}
    This proof is by construction.

    By Definition \ref{def:dim}, we can always find $n$ open balls with radius $\epsilon \leq O(n^{-\frac{1}{d}})$ such that they cover the whole space $M$. We find a covering that is minimal, so any $n-1$ balls cannot cover the whole space. 

    To construct a required path, we will first construct a path through all the centers of the balls and connect the destinations $\bm{x}_i$ to the path. We will then show the length of the resulting path is less than $O(n^{1-\frac{1}{d}})$.

    We start by constructing a path through all the centers of the balls $\{\bm{z}_1, \dots, \bm{z}_n\}$. Because we assume the metric space is connected, $\{\bm{z}_1, \dots, \bm{z}_n\}$ and the metric $c$ jointly defined a graph whose vertices are those centers and the edge weights are the lengths between them. One way to find a path through them is to use the minimum spanning tree on the graph. One can use a greedy algorithm, known as Prim's algorithm \citep{prim1957shortest}, to construct a minimum spanning tree of the $n$ centers. Prim's algorithm starts with an arbitrary vertex in the graph and grows a tree by iteratively adding the next vertex with the minimum edge weight (minimum length) to the current tree. Denote the tree at iteration $i$ being $\mathcal{T}_\tau$. It is guaranteed that $\mathcal{T}_n$ is a minimum spanning tree.
    
    We claim that at any iteration of Prim's algorithm, as long as the algorithm is not terminated, there always exists a point $\bm{z}_i\in \mathcal{T}_\tau$ in the current tree and a remaining point $\bm{z}_j\notin \mathcal{T}_\tau$ such that we can find a path from $\bm{z}_i$ to $\bm{z}_j$ with length less than $2\epsilon$. We prove this by construction.
    
    We let $\mathcal{B}_\tau = \bigcup_{\bm{z}_i\in \mathcal{T}_\tau}\mathcal{B}_{\epsilon}(\bm{z}_i)$ be the space covered by the open balls whose centers are already in the tree $\mathcal{T}_\tau$. For any $\mathcal{B}_\tau$, as long as the algorithm is not terminated, $M$ is not covered by $\mathcal{B}_\tau$ because we require the open ball cover to be minimal. Moreover, because $M$ is connected and is not entirely covered by $\mathcal{B}_\tau$, we can always find a point on the boundary of $\mathcal{B}_\tau$. In other words, we can find $a\in M\setminus \mathcal{B}_\tau$ and an open ball  $\mathcal{B}_{\epsilon}(\bm{z}_i)\in \mathcal{B}$ such that $c(\bm{z}_i,a)=\epsilon$.  Since $a$ is not covered by $\mathcal{B}_\tau$ but in $M$, there must exist another open ball $\mathcal{B}_{\epsilon}(\bm{z}_j)$ that covers it, so we have $c(\bm{z}_j,a)<\epsilon$. By the construction of the open balls, we have $\bm{z}_i,\bm{z}_j\in M$. Therefore, $\bm{z}_i \to a \to \bm{z}_j$ is a path contained in $M$ that has a length less than $2\epsilon$.
    
    Therefore, at every step of Prim's Algorithm, the weight (length) of the tree does not increase more than $2\epsilon$, so the total weight (length) of the minimum spanning tree is less than $2\epsilon n$. It is well-known that the TSP path is upper bounded by two times the length of a minimum spanning tree \citep{laporte1992traveling}. Hence, we conclude that the TSP path through the $n$ centers is at most $4 n \epsilon = O(n^{1-\frac{1}{d}})$.

    Now what remains is to connect $\{\bm{x}_1, \dots, \bm{x}_n\}$ to the existing path. Because the open balls jointly cover the whole space, we know that any point $\bm{x}_i$ will fall into some ball, which means there exists a ball $\mathcal{B}_{\epsilon}(\bm{z}_i)$ such that $c(\bm{x}_i,\bm{z}_i) < \epsilon$.  Hence, we can simply add $\bm{x}_i$ by adding path $\bm{z}_i\to \bm{x}_i \to \bm{z}_i$. Recall that by the choice of $\bm{z}_i$, we have $c(\bm{x}_i,\bm{z}_i) < \epsilon$, so the length of path $\bm{z}_i\to \bm{x}_i \to \bm{z}_i$ is less than $2\epsilon$. Because we have $n$ such paths to add, the length of the resulting path is less than $2n\epsilon$.

    As such, we conclude that the constructed path has a length of at most \[4n\epsilon+2n\epsilon \leq 6 n \epsilon \leq O(n^{1-\frac{1}{d}}).\] 

    To show that there always exists $n$ points such that the shortest path connecting them is at least \(\Omega(n^{1-\frac{1}{d}})\), instead of using the covering number, we consider the packing number. By remark \ref{rmk:packcover}, the aforementioned space can pack at most $n$ open balls with radius $\epsilon/4$. We use the same trick to construct a minimum spanning tree, and its length should be lower bounded by $n\epsilon/2$. Because we know the length of a minimum spanning tree lower bounds the length of any TSP path \citep{laporte1992traveling}, we now have constructed a set of points with a shortest path of \(\Omega(n^{1-\frac{1}{d}}).\)
\end{proof}
\begin{remark}
As mentioned in Sec.\ref{sec:lit}, although solving the exact shortest TSP is NP-hard, there are polynomial-time algorithms that can find a constant approximation of the optimal solution. Because the error of the constant approximation is only a fixed constant multiplier, the guarantee in Theorem \ref{thm:general} holds the same for the actual path we find.
\end{remark}

Theorem \ref{thm:general} implies that the average distance through $n$ points decays at a rate of $O(n^{-\frac{1}{d}})$. To translate the result in our problem: if we can plan the future destinations in advance, \textit{the more future destinations you plan, the less distance you travel on average.} Although determining all the actions in one shot is unrealistic and can lead to huge regret, there are many effective batched algorithms that require only $O(\log\log T)$ batches \citep{gao2019batched,feng2022lipschitz,li2022gaussian} to achieve sublinear regret. This is highlighted in the following sections.

\subsection{Convergence Guarantees}
\label{sec:converge}

It is crucial to realize that Algorithm \ref{alg:plan} tests the exact same set of points chosen by the user's selected batched algorithm. The only difference lies in the order in which they are tested, as dictated by the TSP solution. Consequently, it retains the same theoretical guarantees on cumulative regret as the chosen batched algorithm. Therefore, as long as the batched algorithm converges in cumulative regret, we only need to show that the movement costs are also convergent. This section presents some desired theoretical results that follow directly from Theorem \ref{thm:general}.

We start with a general result that any batched algorithm that takes $o(T^{\frac{1}{d}})$ batches enjoys sublinear cumulative movement costs.
\begin{corollary}
    \label{thm:general_move_costs}
    Fix any metric space $(M,c)$ of dimension $d$ satisfying Assumption \ref{asm:physical}. For any batched algorithms with $N(T) = o(T^{\frac{1}{d}})$, following the TSP path for every batch, the cumulative movement costs $C_T$ is at most $o(T)$.
\end{corollary}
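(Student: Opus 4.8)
The plan is to write $C_T$ as a sum of \emph{within-batch} tour lengths plus \emph{between-batch} transition costs, control the former with Theorem~\ref{thm:general} and the latter with the finite diameter guaranteed by Assumption~\ref{asm:physical}, and then aggregate the within-batch terms by concavity of $x \mapsto x^{1-1/d}$.

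First I would set $n_i \triangleq t_i - t_{i-1}$, so that $\sum_{i=1}^{N(T)} n_i = T$. Within batch $i$ the traveler follows a TSP tour through the at most $n_i$ distinct points of $\mathcal{S}_i$; by Theorem~\ref{thm:general} (and the constant-factor metric-TSP approximation, whose extra multiplier is harmless), this tour has length at most $\kappa\, n_i^{1-1/d}$ for a constant $\kappa = \kappa(M,c)$. A small point worth checking here is that the $O(\cdot)$ of Theorem~\ref{thm:general} can be taken uniform in $n \ge 1$: for any non-degenerate connected metric space $d \ge 1$, so $n^{1-1/d} \ge 1$, and for the finitely many small $n$ the crude bound $(n-1)\operatorname{diam}(M)$ is already of the form $\kappa\, n^{1-1/d}$. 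For the move from the end of batch $i$ to the start of batch $i+1$ (and from $\bm{x}_0$ into batch $1$), Assumption~\ref{asm:physical} makes $(M,c)$ totally bounded, so $D \triangleq \sup_{\bm{x},\bm{y}\in M} c(\bm{x},\bm{y}) < \infty$, and each of these at most $N(T)$ transitions costs at most $D$.

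Combining the two estimates, $C_T \le \kappa \sum_{i=1}^{N(T)} n_i^{1-1/d} + D\, N(T)$. Since $x \mapsto x^{1-1/d}$ is concave on $[0,\infty)$ (because $0 \le 1-1/d \le 1$), Jensen's inequality gives $\sum_i n_i^{1-1/d} \le N(T)\,(T/N(T))^{1-1/d} = N(T)^{1/d}\, T^{1-1/d}$, hence $C_T \le \kappa\, N(T)^{1/d}\, T^{1-1/d} + D\, N(T)$. Plugging in $N(T) = o(T^{1/d})$: the second term is $D\cdot o(T^{1/d}) = o(T)$; the first term is $o(T^{1/d^2})\cdot T^{1-1/d} = o\!\big(T^{\,1 - (d-1)/d^2}\big)$, and since $(d-1)/d^2 \ge 0$ this exponent is $\le 1$, with the little-$o$ still delivering $o(T)$ in the boundary case $d=1$ and the exponent being strictly below $1$ when $d>1$. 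Either way $C_T = o(T)$.

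I expect the only mildly delicate steps to be (i) upgrading the asymptotic statement of Theorem~\ref{thm:general} to a clean per-batch bound valid for \emph{every} batch size, which the trivial diameter bound handles, and (ii) bookkeeping the between-batch jumps, which is precisely where total boundedness (Assumption~\ref{asm:physical}) and the hypothesis $N(T)=o(T^{1/d})$ enter — the latter being needed mainly to kill the $D\,N(T)$ term, i.e.\ in the $d=1$ regime. The concavity estimate, which is the quantitative heart of the argument, is then routine.
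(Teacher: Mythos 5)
Your proposal is correct and follows essentially the same route as the paper: decompose $C_T$ batch by batch, bound each batch's tour via Theorem~\ref{thm:general}, and aggregate using $N(T)=o(T^{\frac{1}{d}})$ (the paper does this more crudely, bounding every batch by the path through all $T$ points, giving $C_T \le N(T)\cdot O(T^{1-\frac{1}{d}})=o(T)$ in one line). Your refinements --- Jensen's inequality yielding the sharper $\kappa\,N(T)^{1/d}\,T^{1-\frac{1}{d}}$ term, and the explicit diameter bound $D\,N(T)$ for the between-batch transitions that the paper leaves implicit --- are valid and tidy up bookkeeping the paper glosses over, but they do not change the underlying argument.
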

\begin{proof}
    This is a direct result of Theorem \ref{thm:general}. The total movement costs are upper-bounded by the number of batches multiplied by the costs of each batch. Given the costs of each batch cannot exceed the shortest path through all the $T$ points (i.e., $o(T^{1-\frac{1}{d}})$), we have the \(C_T = o(T).\)
\end{proof}

Note that Theorem \ref{thm:general} only contributes to the cumulative movement costs $C_T$. Because our method can use any given policy (as shown in Algorithm \ref{alg:plan} line $3$), it inherits exactly the same regret bound on $R_T$ as the given policy. As such, we give three examples of how our result can be adapted to existing batched algorithms. 

\subsubsection{Theory for BO}
To prove the convergence of our algorithm in BO, it suffices to find an algorithm that takes $o(T^{\frac{1}{d}})$ batches and has a sublinear regret. One of the candidates is the Batched Pure Exploration (BPE) algorithm proposed by \cite{li2022gaussian}. 

\begin{theorem}[\citealp{li2022gaussian}]
\label{thm:BPE}
Under assumption \ref{asm:kernel}, there exists an algorithm  that uses $N(T)= O(\log\log T)$ batches, and with high probability, the cumulative regret $R_T$ is at most $\tilde{O}(\sqrt{\gamma_T T})$
\begin{itemize}
    \item For RBF kernel, 
    \(\gamma_T = \tilde{O}\left(\log(T)^d\right)\)  
    \item For Mat\'{e}rn kernel,
    \(\gamma_T = \tilde{O}\left(T^{\frac{d}{2\nu+d}}\right),\) 
    where $\nu$ is a hyper-parameter in kernel.
\end{itemize}
\end{theorem}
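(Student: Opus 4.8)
Theorem~\ref{thm:BPE} restates the main guarantee of \citet{li2022gaussian}, so the plan is not to reprove it in full but to recall the structure of that argument, which combines three ingredients: RKHS confidence bounds, a within-batch pure-exploration guarantee controlled by the maximum information gain $\gamma_T$, and a doubly-exponential batch schedule.

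First I would invoke the self-normalized concentration inequality for RKHS elements (\citealt{chowdhury2019batch}): under Assumption~\ref{asm:kernel}, with probability at least $1-\delta$, $|f(\bm{x}) - \mu_t(\bm{x})| \le \beta_t\,\sigma_t(\bm{x})$ for all $\bm{x}\in M$ and all $t$, with $\beta_t = B + R\sqrt{2(\gamma_t + 1 + \ln(1/\delta))}$; the supremum over the continuum $M$ is handled by a covering argument or the reproducing property. All subsequent reasoning is on this good event.

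Second, within batch $i$ the algorithm plays points of the active set $\mathcal{X}_i$ greedily by posterior variance (pure exploration). The standard bound $\sum_t \sigma_{t-1}^2(\bm{x}_t) \le C_1\gamma_{t_i}$ forces $\max_{\bm{x}\in\mathcal{X}_i}\sigma_{t_i}^2(\bm{x}) \le C_1\gamma_{t_i}/(t_i - t_{i-1})$, so a batch of length $\ell_i := t_i - t_{i-1}$ shrinks the uncertainty over $\mathcal{X}_i$ to $O(\sqrt{\gamma_{t_i}/\ell_i})$. The elimination step $\mathcal{X}_{i+1} = \{\bm{x} : \texttt{UCB}_{t_i}(\bm{x}) \ge \max_{\bm{x}'}\texttt{LCB}_{t_i}(\bm{x}')\}$ then (i) keeps $\bm{x}^*$ on the good event and (ii) guarantees every surviving $\bm{x}$ has gap $f^* - f(\bm{x}) \le 4\beta_{t_i}\max_{\bm{x}'\in\mathcal{X}_i}\sigma_{t_i}(\bm{x}')$, so each point played in batch $i+1$ incurs instantaneous regret $\tilde{O}(\beta_{t_i}\sqrt{\gamma_{t_i}/\ell_i})$.

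Third --- and this is the delicate part --- I would choose the batch endpoints recursively so that $\ell_{i+1}\,\beta_{t_i}\sqrt{\gamma_{t_i}/\ell_i} = \tilde{O}(\sqrt{\gamma_T T})$ for every $i$; the canonical choice $t_i \asymp \sqrt{T\,t_{i-1}}$ (with $t_1 \asymp \sqrt{\gamma_T}$ and $\ell_i \approx t_i$) both balances these per-batch contributions and yields $t_i \approx T^{1 - 2^{-i}}$, hence $N(T) = O(\log\log T)$. Summing the $O(\log\log T)$ per-batch regret terms and absorbing the $\beta_{t}$ and $\log\log T$ factors into $\tilde{O}(\cdot)$ gives $R_T = \tilde{O}(\sqrt{\gamma_T T})$. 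The kernel-specific rates then follow by substituting the known information-gain bounds, $\gamma_T = \tilde{O}((\log T)^d)$ for the RBF kernel and $\gamma_T = \tilde{O}(T^{d/(2\nu+d)})$ for the Mat\'{e}rn-$\nu$ kernel (\citealt{srinivas2009gaussian}, and its refinement by Vakili et al.). The main obstacle is exactly this schedule-balancing step: one must verify that the recursion terminates in $O(\log\log T)$ rounds, that no single batch's regret contribution dominates, and that the confidence multipliers $\beta_{t_i}$ --- which themselves grow with $\gamma_{t_i}$ --- do not upset the balance, all of which is carried out in \citet{li2022gaussian}.
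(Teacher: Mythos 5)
The paper does not actually prove Theorem \ref{thm:BPE}: it is imported verbatim from \citet{li2022gaussian}, so the only thing to check is whether your reconstruction of that argument is sound. The architecture you describe---successive elimination across batches, within-batch maximum-variance (pure exploration) sampling controlled by $\sum_{t}\sigma_{t-1}^2(\bm{x}_t)\le C_1\gamma_{t}$ so that $\max_{\bm{x}\in\mathcal{X}_i}\sigma_{t_i}(\bm{x})=O(\sqrt{\gamma_{t_i}/\ell_i})$, and the doubly exponential schedule $t_i\asymp\sqrt{T\,t_{i-1}}$ giving $N(T)=O(\log\log T)$---is indeed the structure of BPE, and the kernel-specific rates do follow by plugging in the standard information-gain bounds.

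There is, however, a genuine gap in your first ingredient. You invoke the self-normalized adaptive RKHS bound with $\beta_t = B + R\sqrt{2(\gamma_t+1+\ln(1/\delta))}$. With $\beta_{t}=\Theta(\sqrt{\gamma_t})$, each batch contributes roughly $\ell_{i+1}\,\beta_{t_i}\sqrt{\gamma_{t_i}/\ell_i}$, and summing gives $\tilde{O}(\gamma_T\sqrt{T})$, not the stated $\tilde{O}(\sqrt{\gamma_T T})$; the extra $\sqrt{\gamma_T}$ cannot be ``absorbed into $\tilde{O}(\cdot)$'' because for the Mat\'ern kernel $\gamma_T$ is polynomial in $T$ (and $\gamma_T\sqrt{T}$ can even be superlinear when $d$ is large relative to $\nu$, whereas $\sqrt{\gamma_T T}$ is the rate matching the lower bound). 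The crux of \citet{li2022gaussian} is precisely that within a batch the query points depend only on data from \emph{previous} batches, so the noise can be handled with a non-adaptive concentration bound whose multiplier is $\gamma$-free, roughly $B + R\sqrt{2\log(\cdot/\delta)}$ together with a discretization argument over $M$; it is this replacement of the Chowdhury--Gopalan-style $\beta_t$ that lets your balancing step actually deliver $\tilde{O}(\sqrt{\gamma_T T})$. (A minor additional slip: the first batch length in BPE is of order $\sqrt{T}$, not $\sqrt{\gamma_T}$, though this does not affect the final rate.)
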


Because the algorithm in Theorem \ref{thm:BPE} uses only $O(\log\log T)$ batches, and by our Theorem \ref{thm:general} every batch incurs at most $O(T^{1-\frac{1}{d}})$ movement costs, if we follow the TSP to conduct all the experiments in these batches, the total movement cost is at most $O(\log\log T \cdot T^{1-\frac{1}{d}})$. Hence, by choosing the $\pi^\mathcal{B}$ in Algorithm \ref{alg:plan} as BPE, we have convergence on both movement costs and regret.

\begin{corollary}
\label{thm:BO}
Under assumptions \ref{asm:kernel} and \ref{asm:physical}, there exists an algorithm whose cumulative loss $L_T$ is at most $\tilde{O}(T^{1-\frac{1}{d}}+\sqrt{\gamma_T T})$.
\end{corollary}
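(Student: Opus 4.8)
The plan is to instantiate the generic framework of Algorithm~\ref{alg:plan} with the batched policy $\pi^{\mathcal{B}}$ taken to be the BPE algorithm of Theorem~\ref{thm:BPE}, and then to bound $L_T = R_T + C_T$ by treating the two terms separately, exactly as the decoupling philosophy of Section~\ref{sec:method} prescribes. Since BPE uses $N(T) = O(\log\log T)$ batches, which is $o(T^{1/d})$ for any $d\geq 1$, the regime of Corollary~\ref{thm:general_move_costs} applies, and it only remains to make the constants explicit.

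First I would argue that the cumulative regret is inherited verbatim from Theorem~\ref{thm:BPE}. The key observation (already emphasized in Section~\ref{sec:converge}) is that Algorithm~\ref{alg:plan} queries precisely the multiset of designs BPE would have queried; the TSP step only permutes the order of evaluations within each batch. Because BPE commits to an entire batch $\mathcal{S}_i$ on the basis of $\mathcal{D}_{t_{i-1}-1}$ before observing any response in that batch, the posterior $\mathcal{GP}(\mathcal{D}_{t_i})$ available at the start of the next batch — and hence every later decision — is unaffected by the intra-batch order, and so are the confidence widths that drive BPE's per-round regret bound. Therefore $R_T = \tilde{O}(\sqrt{\gamma_T T})$ with high probability, with $\gamma_T$ as in Theorem~\ref{thm:BPE}.

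Second I would bound the movement cost. Write $n_i = |\mathcal{S}_i|$, so $\sum_{i=1}^{N(T)} n_i = T$. Within batch $i$, Theorem~\ref{thm:general} (together with the remark that a constant-factor TSP approximation suffices) yields a tour through the $n_i$ designs of length $O(n_i^{1-1/d}) \le O(T^{1-1/d})$. Between consecutive batches the traveler moves from the last design of $\mathcal{S}_i$ to the first of $\mathcal{S}_{i+1}$; by Assumption~\ref{asm:physical}, $M$ is totally bounded, hence of finite diameter, so each such transition costs $O(1)$. Summing over the $N(T) = O(\log\log T)$ batches gives
\[
C_T \;\le\; N(T)\cdot O(T^{1-1/d}) + N(T)\cdot O(1) \;=\; O\!\big(\log\log T \cdot T^{1-1/d}\big) \;=\; \tilde{O}(T^{1-1/d}),
\]
which is exactly the instance of Corollary~\ref{thm:general_move_costs} obtained from $N(T)=o(T^{1/d})$.

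Adding the two bounds gives $L_T = R_T + C_T = \tilde{O}(\sqrt{\gamma_T T} + T^{1-1/d})$, the claimed rate. The only genuinely delicate point is the first step: one must check that no constant or high-probability event in BPE's regret analysis secretly depends on responses arriving in a canonical within-batch order. Since BPE is a bona fide batched algorithm in the sense of Section~\ref{sec:batch} (it fixes $\mathcal{S}_i$ before seeing the batch's outcomes), this holds by construction, and everything else — matching $N(T)$ against the $o(T^{1/d})$ threshold and invoking Theorem~\ref{thm:general} batch-by-batch — is routine bookkeeping.
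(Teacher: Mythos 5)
Your proposal is correct and follows essentially the same route as the paper: instantiate Algorithm~\ref{alg:plan} with BPE from Theorem~\ref{thm:BPE}, inherit the regret bound $\tilde{O}(\sqrt{\gamma_T T})$ since the TSP step only reorders evaluations within batches, and bound $C_T$ by $N(T)=O(\log\log T)$ times the per-batch tour length $O(T^{1-\frac{1}{d}})$ from Theorem~\ref{thm:general}. Your explicit handling of the inter-batch transitions (each $O(1)$ by total boundedness) and the check that BPE's analysis does not depend on intra-batch ordering are minor refinements the paper leaves implicit, but the argument is the same.
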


Corollary \ref{thm:BO}, combines both $R_T$ and $C_T$ into one unified result.

\subsubsection{Theory for other Bandit Settings}
A central property of Corollary \ref{thm:general_move_costs} is that it holds for any batched algorithm in any setting, as long as we are in a metric space. Here, we show two examples of using our algorithm for the multi-armed and Lipchitz bandit problem.

\paragraph{Stochastic $K$-armed Bandit}

Recently, \citet{gao2019batched} provided a batched successive elimination algorithm that can achieve optimal regret of $\tilde{O}(\sqrt{KT})$ with $O(\log\log T)$ batches. For the MAB problem, the metric space $M$ is a set of discrete points, so the TSP tour passing an arbitrary number of points in the metric space is upper bounded by a constant. In other words, the movement cost of any batch is $O(1)$, giving a total movement cost of $O(\log\log T)$.

\paragraph{Stochastic Lipschitz Bandit}

Following the success of batched MAB, \citet{feng2022lipschitz} uses the idea of successive elimination in Lipschitz bandits to achieve an optimal regret of $\tilde{O}(T^{\frac{d+1}{d+2}})$ with $O(\log\log T)$ batches. Similarly, when combined with our algorithm, movement costs are at most $\tilde{O}(T^{1-\frac{1}{d}})$.

\section{Numerical experiments}
\label{sec:experiment}
In this section, we compare our methods with existing methods. Our benchmarks are:
\begin{itemize}
    \item \texttt{TS}: naive TS, movement-unaware BO approach
    \item \texttt{UCB}: naive UCB, movement-unaware BO approach
    \item \texttt{TUCB}: our method with batched UCB as the batched policy
    \item \texttt{TTS}: our method with batched TS as the batched policy
    \item \texttt{SnAKe} \citep{folch2022snake}
    \item \texttt{GS} \citep{shi2023bayesian}
    \item \texttt{MINI} \citep{calandriello2022scaling}.
\end{itemize}
The experiments are conducted using BoTorch \citep{balandat2020botorch}. We tested all the methods on five synthetic test functions. Table \ref{tab:setup} contains detailed experimental setups for them.
\begin{table}[b]
\centering
\scriptsize
\begin{tabular}{ccclc}
\hline
Function & $d$ & $M$                    & $c(a,b)$    & $\epsilon$           \\ \hline
Ackley   & $2$ & $[-32.768,32.768]^2$   & $\|a-b\|_2$ & $\mathcal{N}(0,1.0)$   \\
Branin   & $2$ & $[-5,10]\times [0,15]$ & $\|a-b\|_2$ & $\mathcal{N}(0,3.0)$   \\
DropWave   & $2$ & $[-5.12,5.12]^2$   & $\|a-b\|_2$ & $\mathcal{N}(0,0.01)$   \\
Grewick  & $2$ & $[-20,20]^2$           & $\|a-b\|_2$ & $\mathcal{N}(0,0.01)$ \\
Levy     & $6$ & $[-5,5]^6$           & $\|a-b\|_2$ & $\mathcal{N}(0,1)$   \\ \hline
\end{tabular}
\caption{Experimental Setups}
\label{tab:setup}
\end{table} Because hyperparameter tuning is pathological in black-box function optimization, we set the hyperparameters following the rule of thumb: we set the exploration factor $\beta_t$ for UCB-like algorithms to be $4$ (effectively $2$ standard deviations) and correspondingly set the elimination confidence $\eta_t = 1$ (effectively $1$ standard deviation). As mentioned in Sec.\ref{sec:thm}, the only requirement for the batch selection is that the batch size should be $o(T^{\frac{1}{d}})$. To be consistent with our theory, we let the batch size grow exponentially by a factor of $c=1.1$ so that the total number of batch sizes is of order $O(\log T)$. We note that our selection of batch sizes is just a simple proof of concept, and it can be flexible depending on the application. Practically speaking, the faster the batch size grows, the fewer batches and the fewer movement costs. As a trade-off, the fewer the batches, the slower the algorithm adapts to newly acquired information, which could potentially slow down its optimization process. That said, any choice of $c>1$ enjoys the same convergence guarantees (up to some constant). 

The results are reported in Fig. \ref{fig:tests} and \ref{fig:tests2}. We run every method on every function for $T=100$ rounds (including the initial point) for $5$ replications.  To avoid confusion, we note that all the algorithms are tested under non-batched settings where algorithms can observe the outcomes immediately after the query. For better visualization, for any time step $T$, we report the average regret of the last $T/2$ steps, (e.g., $\lceil T/2\rceil\sum_{1=t}^{\lceil T/2\rceil}r_t$ for regret). We do the same for movement costs, where at any $T$ we report $\lceil T/2\rceil\sum_{t=1}^{\lceil T/2 \rceil}c(\bm{x}_{t-1}, \bm{x}_{t})$, and $c(\bm{x}_{t-1}, \bm{x}_{t})$ is the absolute Euclidean distance. The solid line represents the mean of the $5$ replications, and the upper/lower bounds of the shaded region correspond to the maximum and the minimum of the five replications.

Note that movement costs and regret are reported separately due to the multi-objective nature of our problem. Of course, one can take a weighted sum of each, but since their scales Reporting them separately allows for flexibility in interpretation and the possibility of adjusting their relative importance according to the context

Based on the figures, we find that our method has superior performance across different test cases by achieving comparable regret as naive baselines while maintaining smaller movement costs. Below, we highlight some interesting insights from the results. 

First, aligned with our analysis in Sec.\ref{sec:intro}, penalization methods like \texttt{GS} tend to be myopic in balancing between exploration and movement costs, which limits their exploration of different locations. In some cases, \texttt{GS} hardly moves in the course of optimization, which incurs almost constant regret. Although \texttt{GS} works very well for the Branin function, this is due to the fact that all the local optimal solutions for Branin are indeed global optimal solutions. Therefore, getting sucked at a local optimal solution becomes a blessing rather than a curse for it.

Second, we observe very limited evidence of \texttt{SnAKe}'s convergence. Although it does not get stuck like \texttt{GS}, it has difficulties converging on regret and movement costs as the optimization proceeds. We suspect this is a side effect of its heuristic $\epsilon$-point deletion, which helps escape local optima by preventing revisiting regions that have been visited before. While $\epsilon$-point deletion helps the algorithm avoid local optima, it also prevents it from staying at the global optimum. Furthermore, despite the fact that \texttt{SnAKe} plans routes for the future, it constantly changes its plan after receiving new observations. This forfeits the benefits of path planning outlined in Theorem \ref{thm:general}. In contrast, our algorithm does not change the planned destination until the next batch. This difference in results again emphasizes the importance of planning ahead.

Third, although \texttt{MINI} theoretically has the lowest movement cost, its approach of evaluating the same few candidates makes it practically inefficient in understanding the landscape of the black-box function. As a result, its optimization process is almost always slower compared to both \texttt{UCB} and \texttt{TUCB}.

Fourth, we observe that \texttt{UCB} has low movement costs on certain test functions, such as Ackley. This aligns with the discussion in Sec.\ref{sec:intro}, where algorithms that do not specifically account for movement can still exhibit decaying movement costs if the black-box function has a clear global optimum. In such cases, pursuing better designs naturally results in reduced movement costs.

Finally, \texttt{UCB} outperforms \texttt{TS}, which is expected given the exploratory nature of \texttt{TS}, leading to excessive movement and consistently higher movement costs, as it is less greedy. However, when paired with our method, the movement costs of \texttt{TTS} decrease significantly. This not only validates our algorithm's effectiveness but also highlights its adaptability to any acquisition function chosen.

\begin{figure}[H]
\centering
\includegraphics[width=0.49\linewidth]{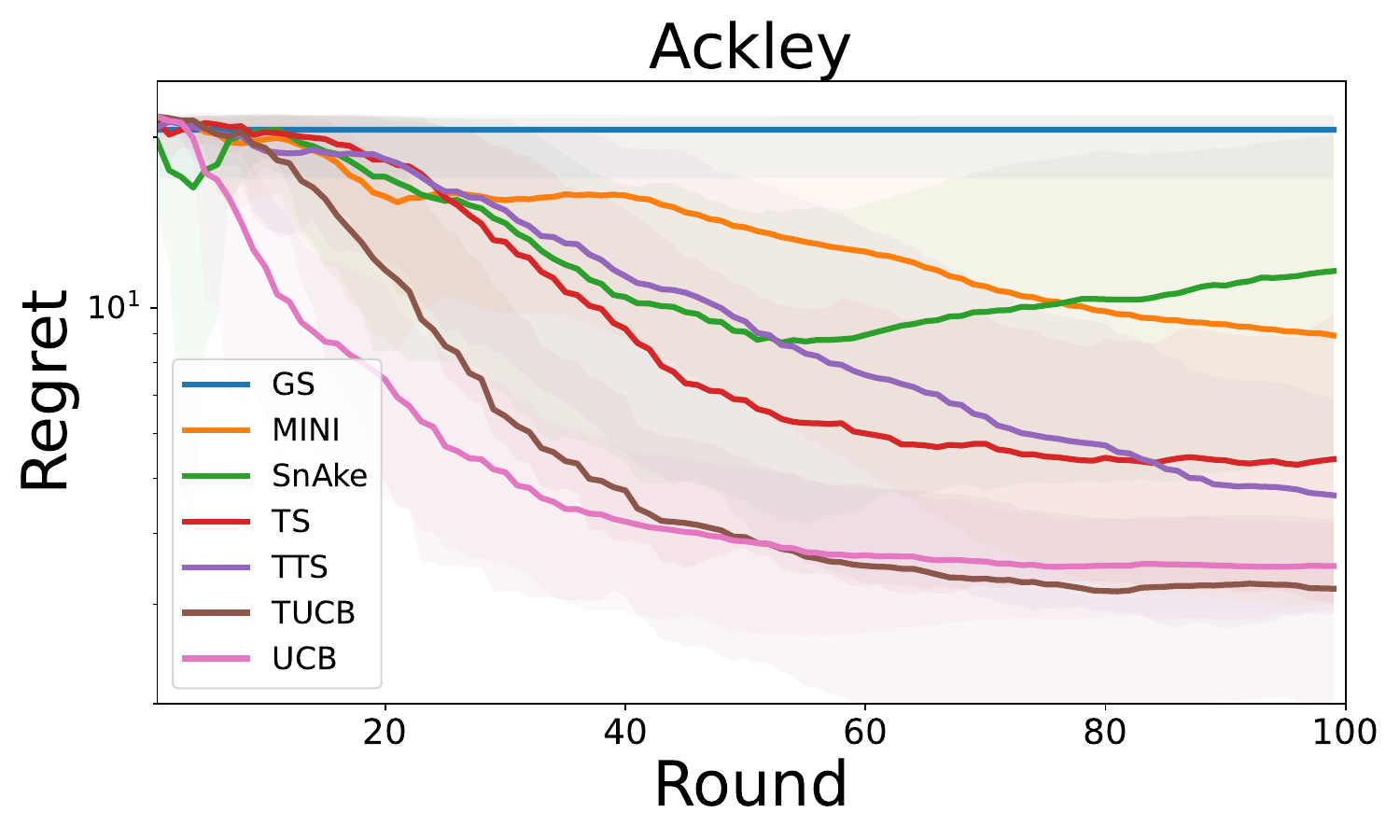}
\hfill
\includegraphics[width=0.49\linewidth]{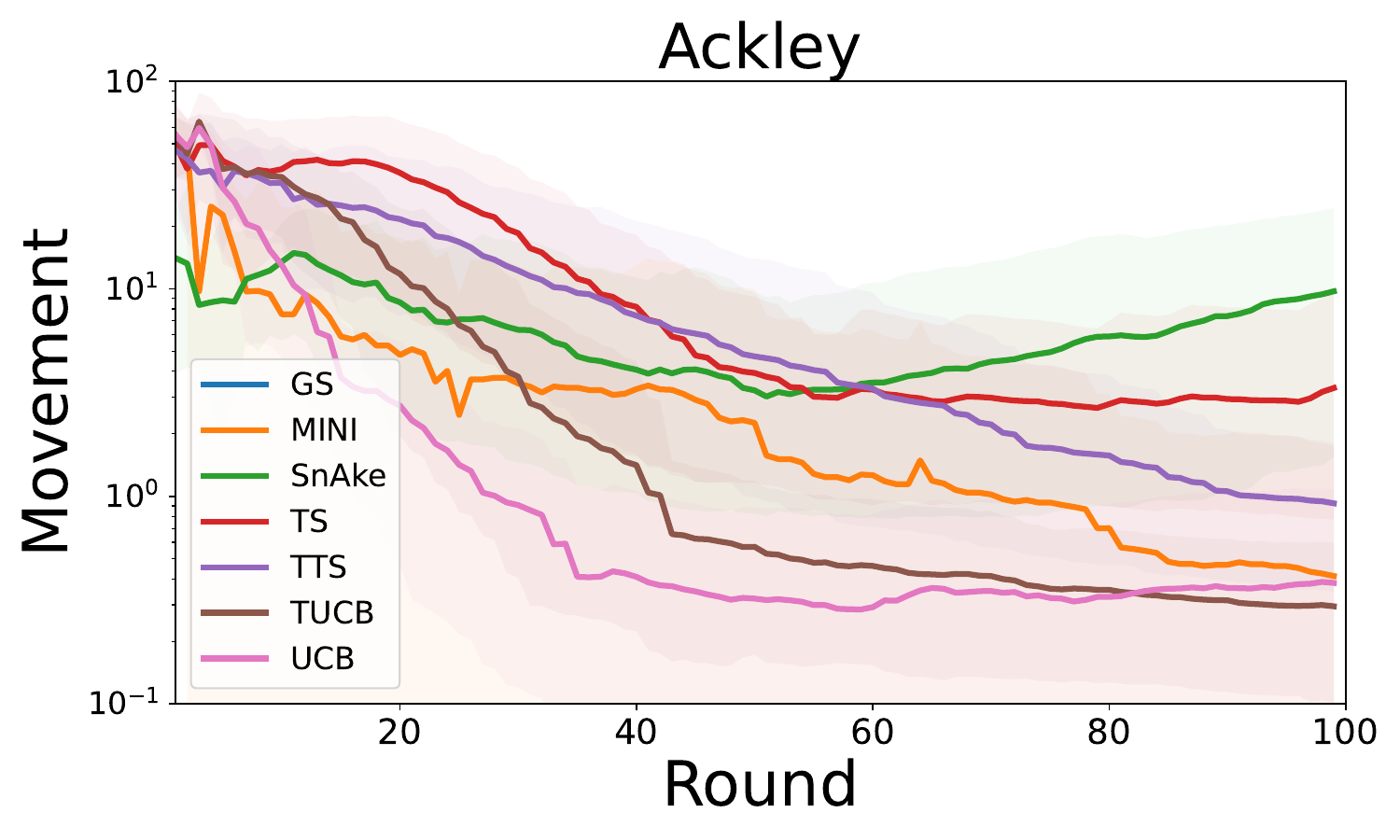}
\includegraphics[width=0.49\linewidth]{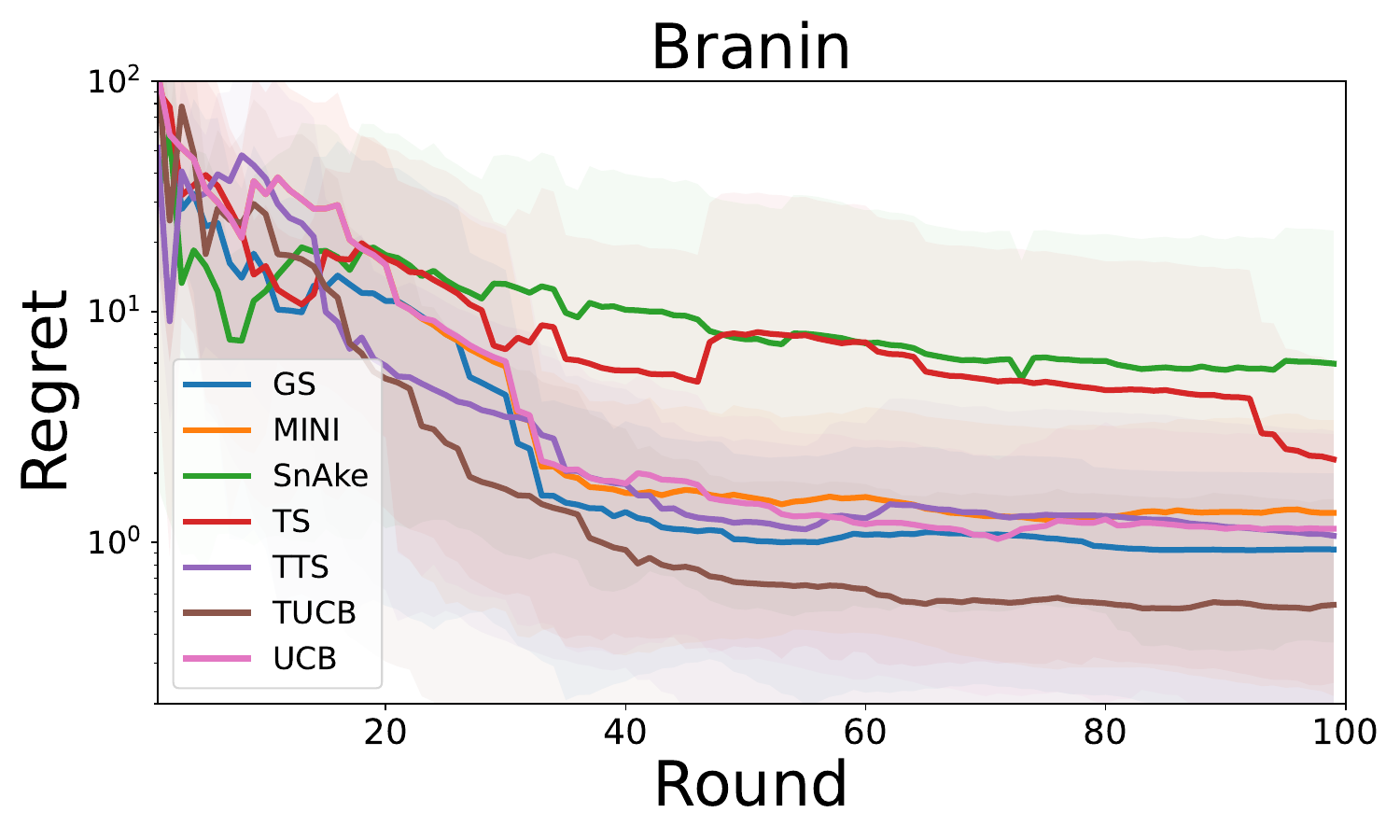}
\hfill
\includegraphics[width=0.49\linewidth]{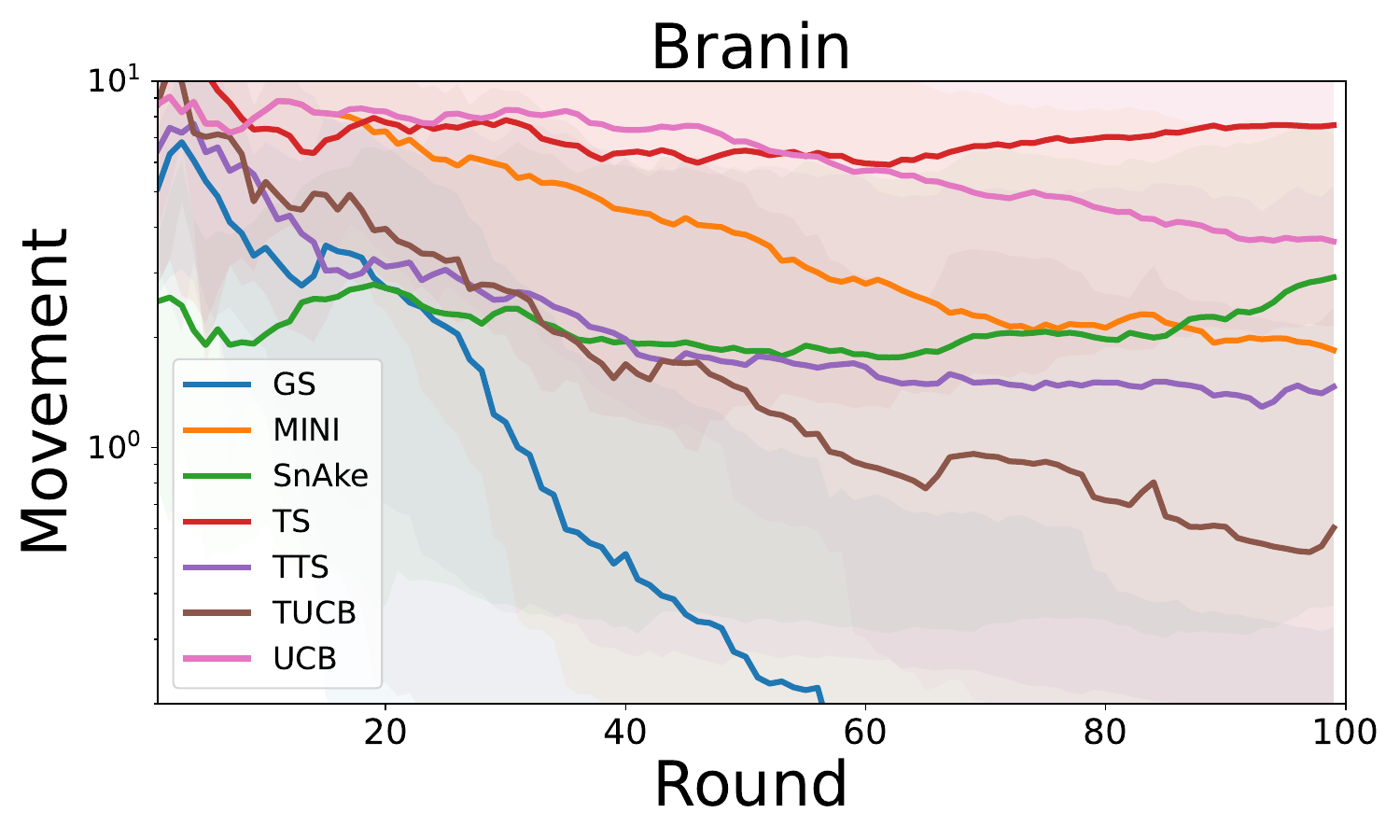}
\caption{Regret (left) and Movement costs (right) on Synthetic Test Benchmarks}
\label{fig:tests}
\end{figure}
\begin{figure}[H]
\centering
\includegraphics[width=0.49\linewidth]{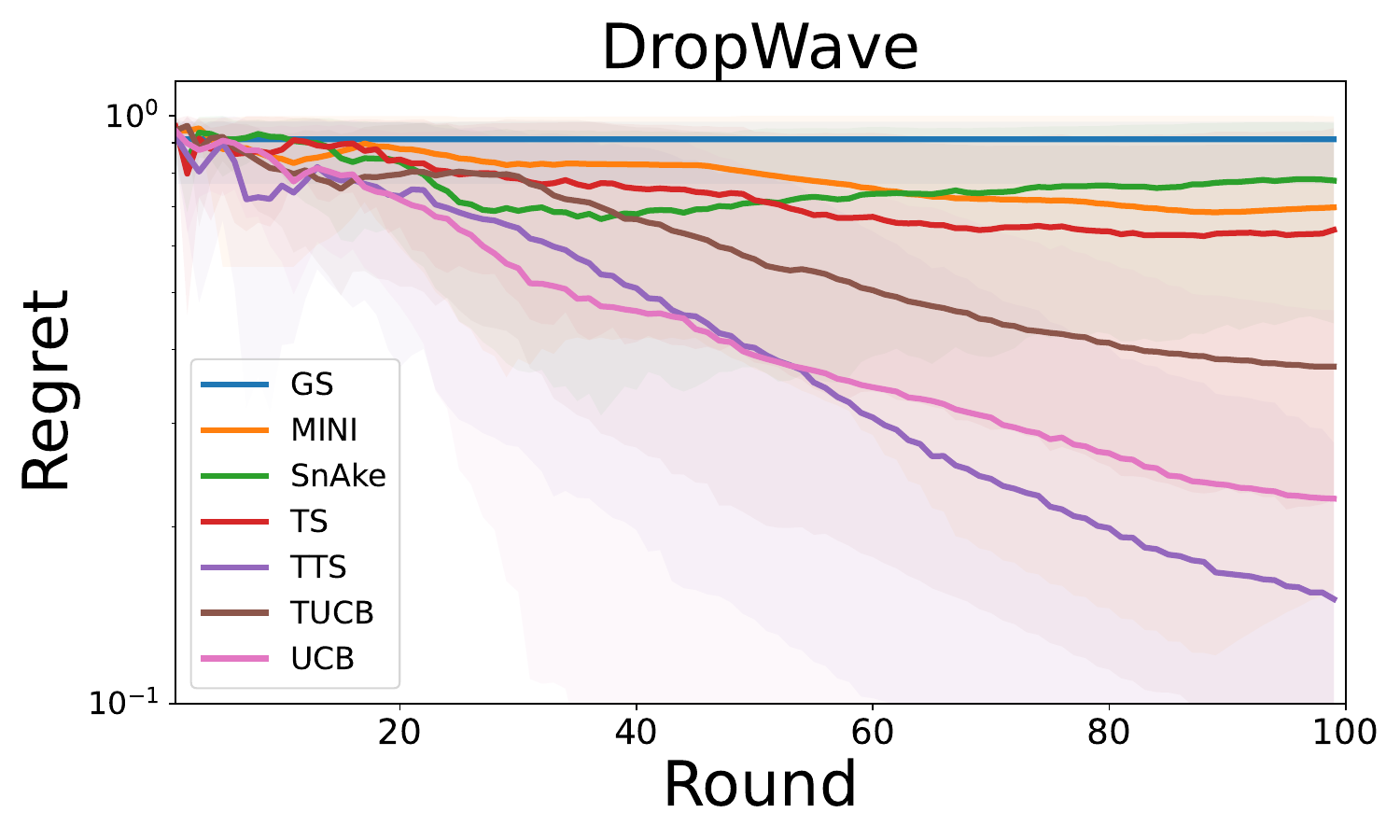}
\hfill
\includegraphics[width=0.49\linewidth]{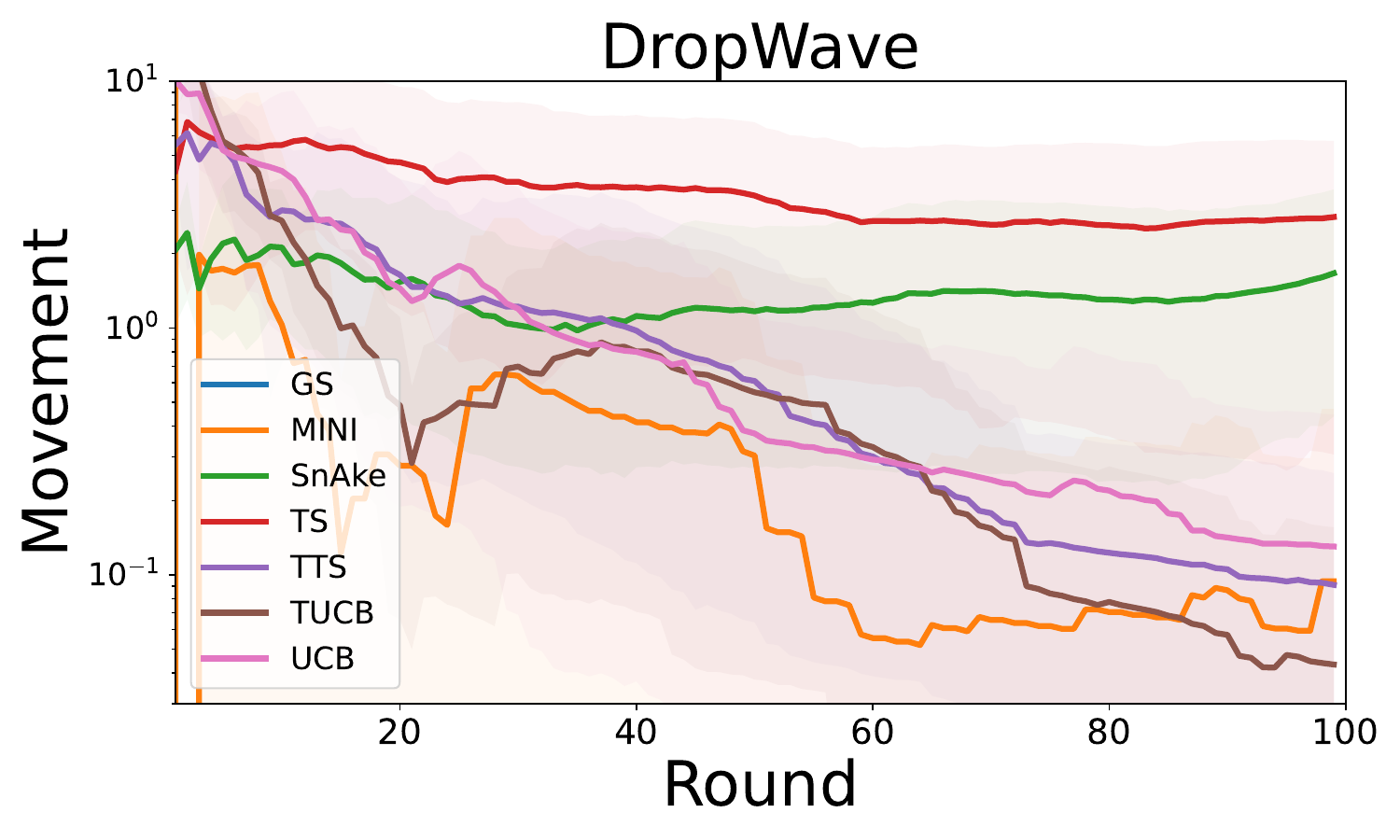}
\includegraphics[width=0.49\linewidth]
{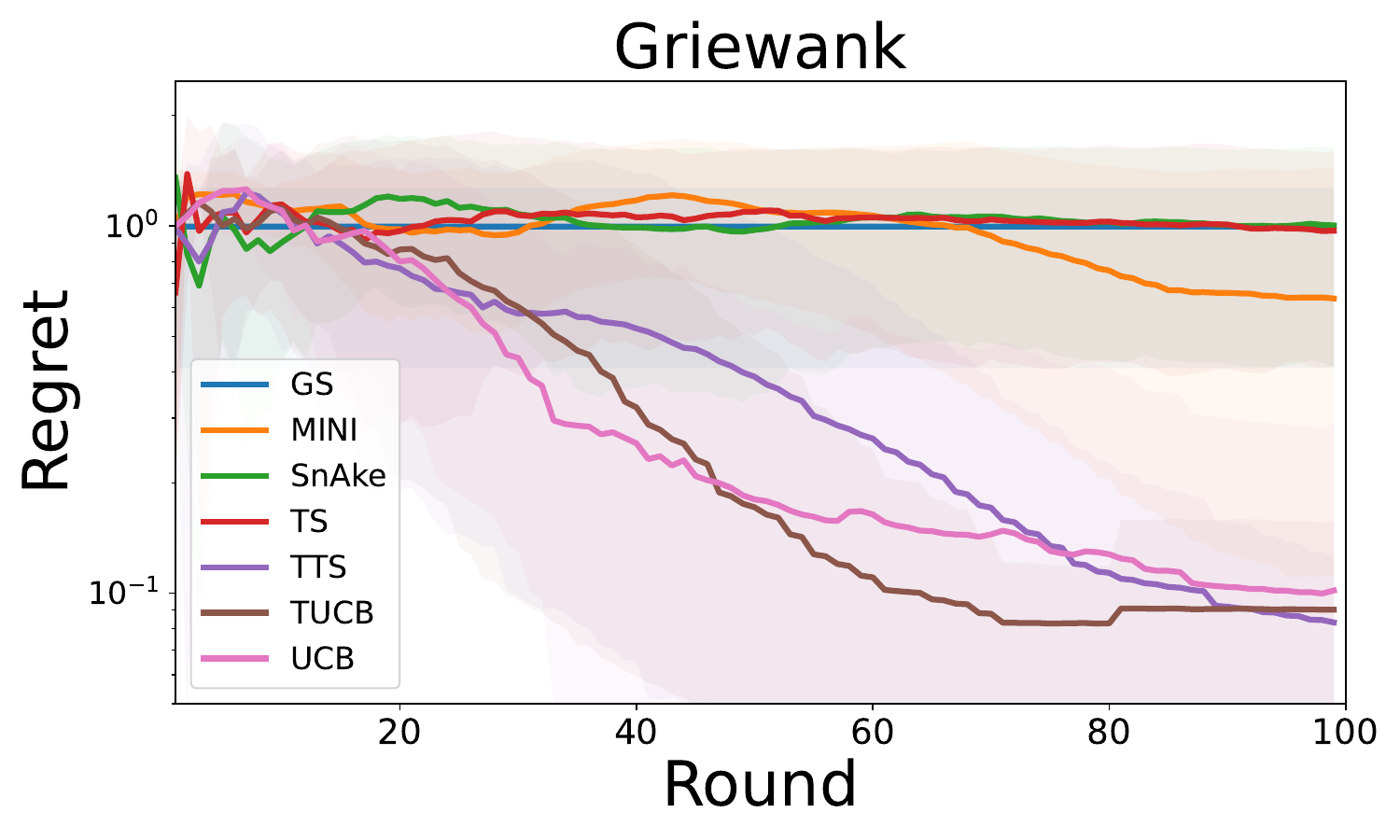}
\hfill
\includegraphics[width=0.49\linewidth]{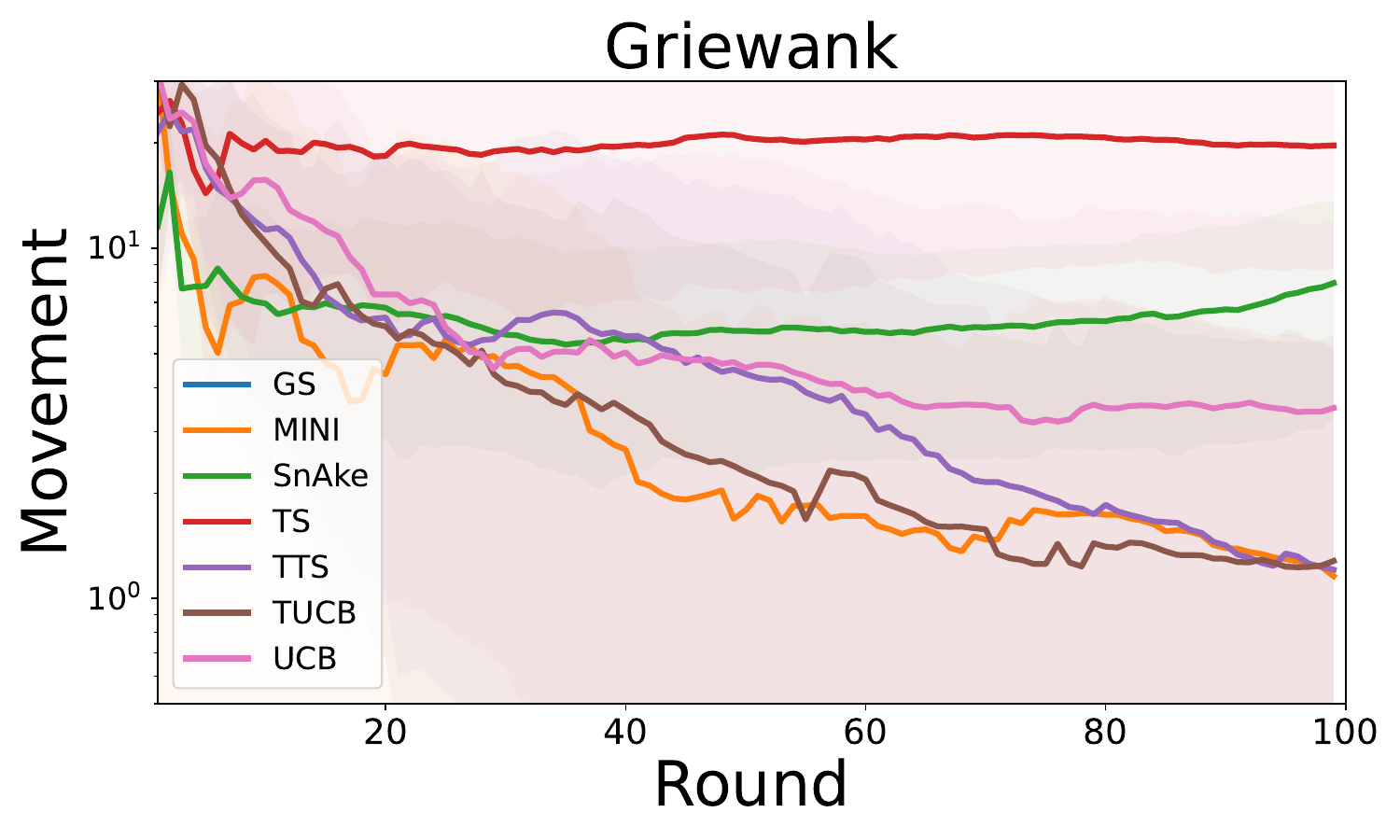}
\includegraphics[width=0.49\linewidth]{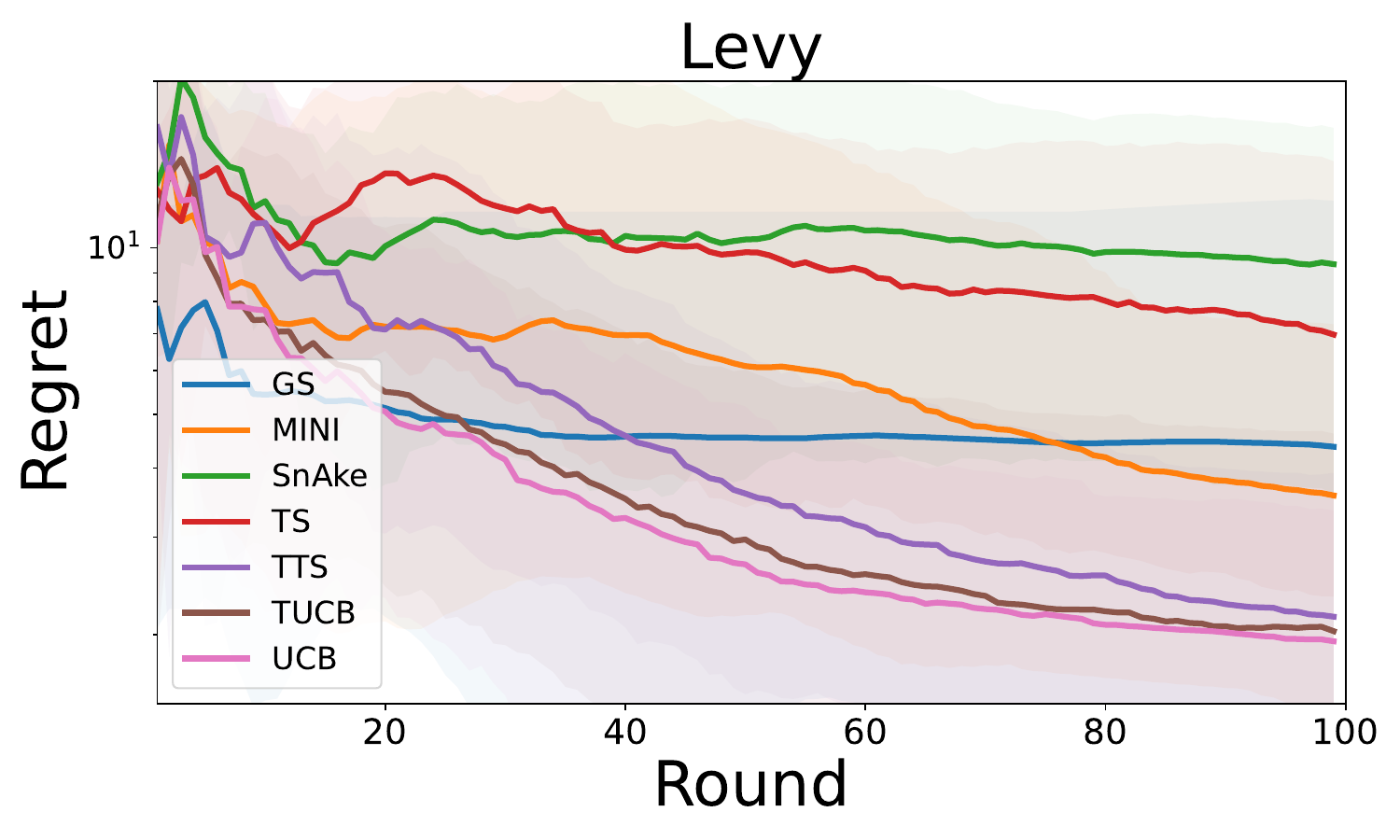}
\hfill
\includegraphics[width=0.49\linewidth]{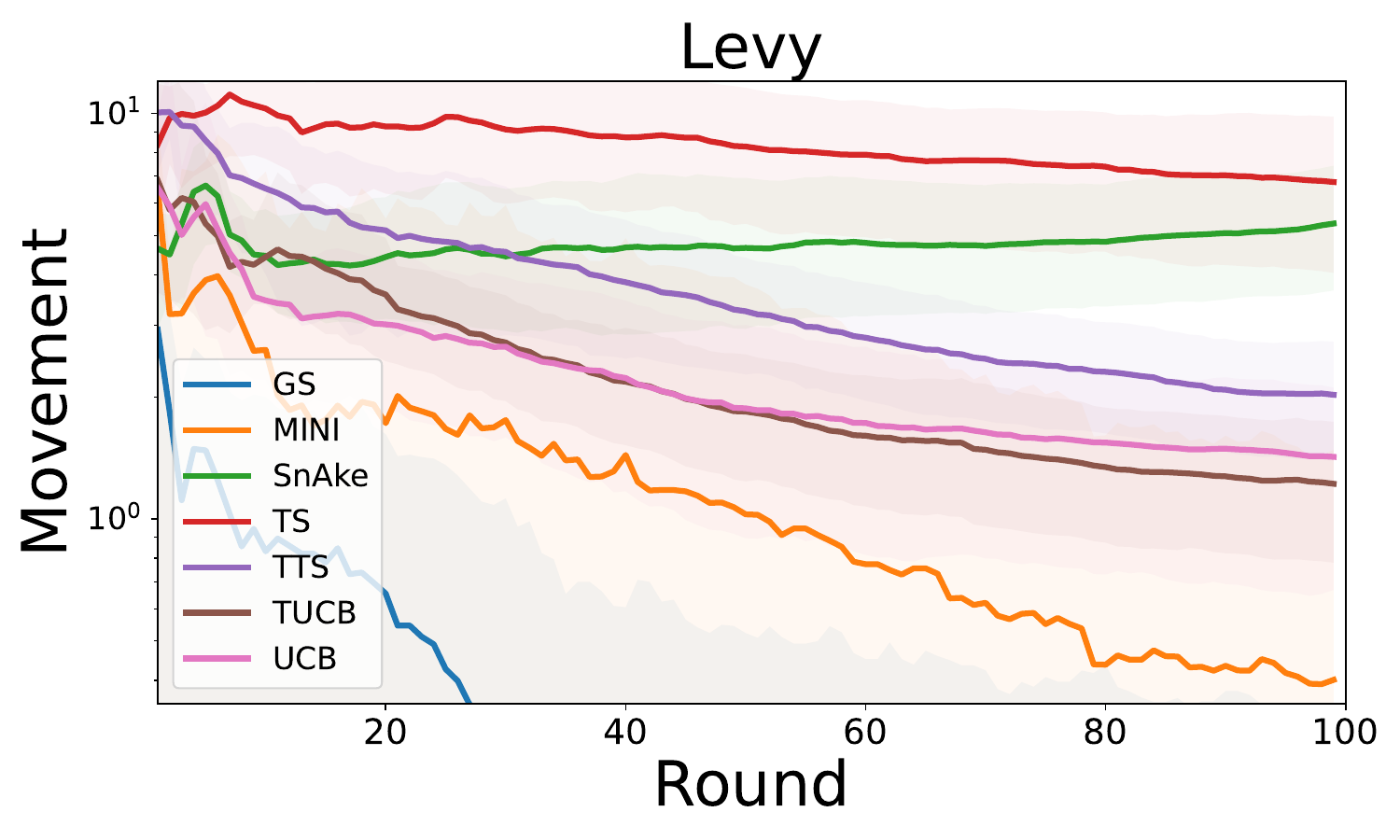}
\caption{Regret (left) and Movement costs (right) on Synthetic Test Benchmarks}
\label{fig:tests2}
\end{figure}

\section{Conclusion and Discussion of Future Work}
This paper introduces a framework for BO with movement costs by taking TSP tours within batched algorithms. Our approach is a plug-in that can be seamlessly integrated into any batched algorithm, whether in BO or general bandit algorithms. We demonstrate that our method offers both theoretical convergence and superior practical performance. We conclude by noting that, although our method achieves optimal rates, the growth rate of movement costs may be faster than that of regrets when the metric space is in high dimensions. Future work could focus on developing algorithms with less movement dependency on dimensionality.

\bibliography{references}

\end{document}